\newtheorem{theorem}{Theorem}
\newtheorem{lemma}[theorem]{Lemma}
\newtheorem{proposition}[theorem]{Proposition}
\newtheorem{claim}[theorem]{Claim}
\newtheorem{hypothesis}[theorem]{Hypothesis}
\newtheorem{mylemma}[theorem]{Lemma}
\theoremstyle{definition}
\newtheorem{mydefinition}[theorem]{Definition}
\newtheorem{remark}[theorem]{Remark}
\newtheorem*{remark*}{Remark}
\newcommand{\newextmathcommand}[2]{%
    \newcommand{#1}{\ensuremath{#2}\xspace}
}
\newcommand{\renewextmathcommand}[2]{%
    \renewcommand{#1}{\ensuremath{#2}\xspace}
}
\newextmathcommand{\DyckTwoReach}{\mathsf{D_2Reach}}
\newextmathcommand{\Lang}{\mathcal L}
\newextmathcommand{\RLang}{\mathcal L'}
\newextmathcommand{\A}{\mathcal A}
\newextmathcommand{\Ak}{\mathcal A_k}
\renewextmathcommand{\AA}{\mathcal A'}
\newextmathcommand{\B}{\mathcal B}
\newextmathcommand{\Rp}{\mathcal R}
\newextmathcommand{\PDA}{\mathcal P}
\newextmathcommand{\PDAA}{\mathcal P'}
\newextmathcommand{\SAT}{\mathsf{SAT}}
\newextmathcommand{\SharpSAT}{\mathsf{\# SAT}}
\newextmathcommand{\TAUT}{\mathsf{TAUT}}
\newextmathcommand{\SETH}{\mathsf{SETH}}
\newextmathcommand{\NSETH}{\mathsf{NSETH}}
\newextmathcommand{\NTIME}{\mathsf{NTIME}}
\newextmathcommand{\CONTIME}{\mathsf{coNTIME}}
\newextmathcommand{\DTIME}{\mathsf{DTIME}}
\newextmathcommand{\MATIME}{\mathsf{MATIME}}
\newextmathcommand{\coMATIME}{\mathsf{co\text{-}MATIME}}
\renewextmathcommand{\P}{\mathsf{P}}
\newextmathcommand{\NP}{\mathsf{NP}}
\newextmathcommand{\MA}{\mathsf{MA}}
\newextmathcommand{\PSPACE}{\mathsf{PSPACE}}
\newextmathcommand{\HardestA}{\A_0}
\newextmathcommand{\AlmostHardestA}{\A'_0}
\newextmathcommand{\prestar}{\mathrm{Pre}^*(R)}
\newextmathcommand{\eps}{\varepsilon}
\renewcommand{\epsilon}{\varepsilon}
\newcommand{\sset}{\subseteq}
\newextmathcommand{\partto}{\rightharpoonup}
\newextmathcommand{\opr}{\text{\textup{\texttt{(}}}}
\newextmathcommand{\clr}{\text{\textup{\texttt{)}}}}
\newextmathcommand{\ops}{\text{\textup{\texttt{[}}}}
\newextmathcommand{\cls}{\text{\textup{\texttt{]}}}}
\newextmathcommand{\BrTwo}{\{ \opr, \clr, \ops, \cls \}}
\newextmathcommand{\OpenBr}{\{ \opr, \ops \}}
\newextmathcommand{\CloseBr}{\{ \clr, \cls \}}
\newextmathcommand{\qfinal}{q_\textsc{f}}
\newcommand{\set}[1]{\{ #1 \}}
\newextmathcommand{\lmarker}{{\lhd}}
\newextmathcommand{\rmarker}{{\rhd}}
\newextmathcommand{\llmarker}{{\trianglelefteq}}
\newextmathcommand{\rrmarker}{{\trianglerighteq}}
\newextmathcommand{\shuf}{\parallel}
\newcommand{\negate}{\overline}
\newcommand{\twonpda}{\mathcal{A}}
\newextmathcommand{\pureSigma}{\Sigma \setminus \{\lmarker, \rmarker\}}
\newextmathcommand{\sdelim}{{\mathtt{;}}}
\newextmathcommand{\strack}{\diamond}
\newextmathcommand{\Bin}{\{0, 1\}}
\newextmathcommand{\aAND}{\textsc{and}}
\newextmathcommand{\aOR}{\textsc{or}}
\newextmathcommand{\hash}{\mathtt\#}
\newextmathcommand{\amdelima}{\star}
\newextmathcommand{\amdelimb}{*}
\newextmathcommand{\offset}{\mathsf{offset}}
\newextmathcommand{\ind}{\mathsf{index}}
\newextmathcommand{\vmark}{\#}
\newextmathcommand{\edgesep}{*}
\newextmathcommand{\poly}{\mathrm{poly}}
\newextmathcommand{\polylog}{\mathrm{polylog}}
\newextmathcommand{\AllNonTerminals}{\overrightarrow{V^2}}
\newextmathcommand{\NonTerminals}{\mathsf{NT}}
\newcommand{\NT}{\NonTerminals}
\newcommand{\nont}[2]{\overrightarrow{#1 #2\vphantom{d}}}
\newcommand*{\br}{\nobreak\discretionary{}{\hbox{}}{}}
\newenvironment{claimproof}{\begin{proof}}{\end{proof}}
\title{Subcubic Certificates for CFL Reachability}
\author{Dmitry Chistikov${}^1$ \and Rupak Majumdar${}^2$ \and Philipp Schepper${}^3$}
\date{%
\normalsize
$^1$%
Centre for Discrete Mathematics and its Applications (DIMAP) \&
Department of Computer Science, University of Warwick,
Coventry, United Kingdom\\
\texttt{d.chistikov@warwick.ac.uk}\\
\mbox{}\\
$^2$%
Max Planck Institute for Software Systems, Kaiserslautern, Germany\\
\texttt{rupak@mpi-sws.org}\\
\mbox{}\\
$^3$
CISPA Helmholtz Center for Information Security, Saarbr\"ucken, Germany \&
Saarbr\"ucken Graduate School of Computer Science, Saarland Informatics Campus, Germany\\
\texttt{philipp.schepper@cispa.saarland}}
\begin{document}

\maketitle

\begin{abstract}
Many problems in interprocedural program analysis can be modeled as the context-free language (CFL) reachability problem on graphs and can be
solved in cubic time.
Despite years of efforts, there are no known truly sub-cubic algorithms for this problem.
We study the related \emph{certification} task:
given an instance of CFL reachability, are there small and efficiently
checkable certificates for the existence and for the non-existence of a path?
We show that, in both scenarios,
there exist succinct certificates
($O(n^2)$ in the size of the problem)
 and these certificates can be checked in subcubic (matrix multiplication) time.
The certificates are based on grammar-based compression of paths (for positive instances) and on invariants represented as
matrix constraints (for negative instances).
Thus, CFL reachability lies in nondeterministic and co-nondeterministic \emph{subcubic} time.

A natural question is whether faster algorithms for CFL reachability
will lead to faster algorithms for combinatorial problems such as Boolean satisfiability (SAT).
As a consequence of our certification results,
we show that there cannot be a fine-grained reduction from SAT to CFL reachability
for a conditional lower bound stronger than $n^\omega$, 
unless the nondeterministic strong exponential time hypothesis (NSETH) fails.

Our results extend to related
subcubic equivalent problems: pushdown reachability and
two-way nondeterministic pushdown automata (2NPDA) language recognition.
For example, we describe
succinct certificates for pushdown non-reachability (inductive invariants)
and observe that they can be checked in matrix multiplication time.
We also extract a new hardest 2NPDA language, capturing the
``hard core'' of all these problems. 
\end{abstract}

\newpage

\section{Introduction}
\label{sec:intro}

Context-free reachability is a fundamental problem in interprocedural program analysis, verification of recursive
programs, and database theory \cite{DolevEK82,Yannakakis90,MelskiReps,RHS95,BEM97}.
For a fixed context-free language (CFL) $\mathcal{L}$ over an alphabet $\Sigma$,
given a directed graph $G = (V,E)$, an edge-labeling function $\lambda : E \rightarrow \Sigma$, 
and two vertices $s, t \in V$, the $\mathcal{L}$-reachability problem asks if there is a path from $s$ to
$t$ in $G$ such that the word formed by concatenating the labels along the path belongs to $\mathcal{L}$.
It is well-known that the problem can be solved in time cubic in the size of the graph for any fixed CFL.
However, despite many years of efforts, we only know speedups by logarithmic factors (i.e., to $O(n^3 / \log n)$)
\cite{Rytter85,Chaudhuri},
leading to a conjecture that no better algorithms are possible for this and several related problems 
\cite{HeintzeMcAllester}.
In recent years, a number of results in fine-grained complexity give credence to the conjecture
by demonstrating various conditional lower bounds for the problem \cite{ChatterjeeCP18,AbboudBW15a,MathiasenP21},
but even so the possibility of algorithms with running time~$n^\omega$ or above has not been ruled out.
Here, $\omega < 2.4$ is the matrix multiplication exponent~\cite{CoppersmithW90,Williams12}.

In this paper, we study the problem of \emph{certifying} an instance of CFL reachability.
Intuitively, this problem asks for easily verifiable proofs of inclusion or non-inclusion. 
Given a (positive or negative) instance of CFL reachability, we ask if there is an efficiently checkable
proof that will convince anyone that the instance is indeed positive or negative. 

Formally, a \emph{certificate system} for CFL reachability consists of two algorithms (the checkers),
one for positive instances and one for negative instance.
Each checker takes as input an instance of the problem and an additional string (called the certificate) and accepts
or rejects.
The positive (resp.\ negative) checker is \emph{complete} if for each positive (resp.\ negative) instance,
there is a certificate that makes it accept, and \emph{sound} if for each negative (resp.\ positive) instance,
there is no certificate that makes it accept.
Of course, since the instance can be decided in cubic time, a certificate system is non-trivial only if the
checkers run in subcubic time (in the size of the instance).

Our main result shows the existence of \textbf{subcubic certificate systems for CFL reachability}: every positive
or negative instance has a \emph{quadratic} certificate and a checker that runs in $O(n^\omega)$ time.
\begin{itemize}
\item
For a positive instance of the problem, a naive certificate is a path from $s$ to~$t$ witnessing inclusion.
Unfortunately, this is not an efficient certificate, since it is known that the shortest path can be exponentially
long in the size of the graph.
We show that the shortest path is well-compressible by a context free grammar of size ${O}(n^2)$ in the
number of vertices of the graph.
Moreover, given such a compressed representation, there is a checker verifying in time $O(n^2)$ that
the grammar indeed encodes a witness path.
\item
For a negative instance of the problem, a certificate is an inductive invariant that demonstrates \emph{non-}reachability.
We show that such an inductive invariant can be represented as relations between a constant number of $n\times n$ matrices, and there is a checker
verifying in time $O(n^\omega)$ that such an encoding does represent an inductive invariant.
Additionally, if we allow randomization, there is a randomized checker running in $O(n^2)$ time.
\end{itemize}
Summing up, CFL reachability can be certified in subcubic time.
In retrospect, the certificate system is simple but illuminates a conceptually new aspect of an old problem. 
Certificate systems make it possible to separate two possibly independent phases of computation,
\emph{finding} a solution to a computational problem and \emph{verifying} it.

We consider complexity-theoretic implications.
Impagliazzo and Paturi \cite{IP01} introduced the strong exponential time hypothesis ($\SETH$), which informally states that \SAT has 
no algorithms better than exhaustive search.
Over the years, $\SETH$ has become a fundamental assumption relative to which many fine-grained complexity results are proved~\cite{WilliamsSurvey}.
For example, $\SETH$ implies current (quadratic) algorithms for orthogonal vectors or edit distance problems are optimal.
A natural question is if $\SETH$ also implies that cubic algorithms for CFL reachability are optimal.

Our result shows that such a reduction would be very difficult to find. 
Carmosino et al.~\cite{Carmosino} extended $\SETH$ to the nondeterministic strong exponential time hypothesis ($\NSETH$), which
states that there is no algorithm for Boolean tautology better than exhaustive search,
even with nondeterministic guessing.
They show that both proving and refuting $\NSETH$ imply breakthroughs in computational complexity.
Our subcubic certification result implies that any conditional lower bound for CFL reachability from \SAT and $\SETH$
will show that $\NSETH$ does not hold.

A model checking problem closely related to CFL reachability is
pushdown reachability~\cite{BEM97,FinkelWW97}.
Our results lead to \textbf{a subcubic certificate system for pushdown reachability} too,
by extracting quadratic certificates from the standard saturation-based algorithm
and the triplet construction for PDA to CFG conversion.
Indeed, by exploiting fine-grained reductions between CFL reachability, pushdown reachability, the emptiness problem for pushdown automata,
and the recognition problem for two-way nondeterministic pushdown automata (2NPDA), 
we show all these problems (as well as other related problems known in the literature) have subcubic certificate systems.
Our constructions and reductions have several implications.
First, succinct certificates for pushdown (non-)reachability
checkable in subcubic time is a new observation; it can have 
potentially practical application in \emph{checking} proofs of programs \cite{Necula97} 
and in ``exports'' of model checking
such as certificate set analysis in trust management systems~\cite{JhaR04}.
Second, our reductions lead to a new insight beyond certification.
We identify
\textbf{a new hardest 2NPDA language}, that is, a fixed
2NPDA language $L_0$ such that for every 2NPDA language $L$
there is a homomorphism $h$ such that $w \in L$ iff $h(w) \in L_0$. 
A different hardest language was previously found by Rytter~\cite{Rytter-hardest} using language-theoretic techniques.
However, our proof and reductions strengthen the link between
2NPDA language recognition and CFL reachability,
pointing to the hardest instances of the latter.

\paragraph*{Related work.}

In a quest to classify the
complexity of problems in P,
\emph{fine-grained reductions} interlink the asymptotic running time
of algorithms for various problems.
A fine-grained reduction shows that
a faster algorithm for one problem automatically
implies a faster algorithm for another problem.
Conversely, the existence of fine-grained reductions can be 
interpreted as \emph{conditional lower bounds:}
no faster algorithm exists, unless a state-of-the-art algorithm
for a well-known problem is actually suboptimal.
For example, a truly sub-quadratic algorithm for Orthogonal Vectors 
will lead to a $2^{(1-\epsilon) n}$-time algorithm for \SAT, breaking $\SETH$ \cite{WilliamsSurvey}.
Similarly, the $k$-Clique conjecture states that no (randomized or deterministic) algorithm can detect a $k$-Clique
on an $n$-vertex graph in time $O(n^{\frac{\omega k}{3} - \epsilon})$ for $\epsilon > 0$.
Abboud et al.~\cite{AbboudBW15a} show a reduction from the $k$-Clique problem
to CFL \emph{recognition}, giving a conditional lower bound of order $n^\omega$
and matching Valiant's $\tilde{O}(n^\omega)$ upper bound for the problem~\cite{Valiant75}.
This lower bound applies to CFL reachability as well.
Chatterjee et al.~\cite{ChatterjeeCP18}, using Lee's result~\cite{Lee02}, reduce Boolean matrix multiplication
to Dyck-$k$ reachability (for growing~$k$), showing that faster algorithms for the latter
avoiding matrix multiplication would be a breakthrough.
Chatterjee and Osang~\cite{ChatterjeeO17} show a similar reduction
to PDA emptiness.

More broadly, a range of problems in formal languages are now being approached
with tools from modern algorithms and complexity~\cite{Fernau19}.
Our work contributes to this ongoing effort.
Backurs and Indyk~\cite{BackursI16} and
Bringmann, Gr{\o}nlund, and Larsen~\cite{BringmannGL17} find
\SAT-based conditional lower bounds for regular expression matching problems.
Oliveira and Wehar~\cite{OliveiraW18} show
reductions between triangle finding, 3SUM,
and the non-emptiness of intersection of two or three DFA.
Potechin and Shallit~\cite{PotechinS20} show a reduction from Orthogonal Vectors to
the acceptance problem for (a subclass of) NFA and a reduction from triangle finding to (unary) NFA acceptance.
Fernau and Krebs~\cite{FernauK17} establish conditional lower bounds
for a variety of automata-theoretic problems beyond~P.
Wehar and co-authors have shown that faster algorithms for
various intersection non-emptiness problems have consequences
for structural complexity classes~\cite{Wehar14,SwernofskyW15,OliveiraW20}.
We discuss further related work in Section~\ref{s:fl}.

\section{Context-free reachability and Dyck-2 reachability}
\label{s:pre}

Let $\mathcal{L}$ be a fixed language.
Given a directed graph $G = (V, E)$, an edge-labeling function $\lambda \colon E \rightarrow \Sigma$,
and two vertices $s, t \in V$, 
the $\mathcal{L}$-reachability problem asks 
if there is a path from $s$ to $t$
(possibly repeating vertices and edges)
such that the word formed by concatenating the labels along the path
belongs to $\mathcal{L}$ \cite{Yannakakis90}.
When $\mathcal{L}$ is a fixed context-free language, the problem is called \emph{CFL reachability.}
CFL reachability plays a foundational
role in several areas within computer science.
To the best of our knowledge, it first appears
in the work by Dolev, Even, and Karp~\cite{DolevEK82}
as the combinatorial core
in the security analysis of a cryptographic protocol.
Yannakakis~\cite{Yannakakis90} and
Melski and Reps~\cite{MelskiReps} elucidate the
role of this problem
in the context of database theory
and interprocedural program analysis, respectively,
providing in particular a historical sketch.

In formal language theory, $\mathcal{L}$-reachability and CFL reachability
can be seen as providing an algorithmic perspective on
the classic definition of rational index
of a language~\cite{BoassonCN81,P92}.
These problems have also been studied under the name ``regular realizability''
(see, e.g.,~\cite{Vyalyi11,VyalyiR15}).

For CFL reachability, without loss of generality, the fixed language can be assumed to be 
the \emph{Dyck-$2$ language}.
This is the language of balanced parentheses with two kinds of parenthesis symbols.
Formally, it is the context free language over the alphabet $\BrTwo$ defined by the
following context-free grammar:
\[
S \to SS \mid \opr\; S\; \clr \mid \ops\; S\; \cls \mid \epsilon
\]
The \emph{Dyck-$2$ reachability} problem, denoted \DyckTwoReach,
is the $\mathcal{L}$-reachability problem when $\mathcal{L}$ is the Dyck-$2$ language. 
%
\begin{claim}
\label{claim:pda-reach-to-dyck-2-reach}
Let $(G, \lambda, s, t)$ be an instance of the  CFL reachability problem.
There is a linear-time reduction (in the bit-size of the input)
to an instance $(G', \lambda', s', t')$ of the Dyck-$2$ reachability problem.
\end{claim}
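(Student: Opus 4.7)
The plan is to reduce via a pushdown automaton for $\Lang$. Since $\Lang$ is fixed, it is recognised by some fixed PDA $\PDA$ with finite state set $Q$ and stack alphabet $\Gamma$. I first put $\PDA$ into a convenient normal form: a unique initial state $q_0$ and accepting state $\qfinal$; acceptance requires reaching $\qfinal$ with empty stack; and every transition either pushes or pops exactly one stack symbol, optionally reading one letter of $\Sigma$. All such normalisations are classical and affect only constants depending on $\Lang$.

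The first step of the reduction constructs an intermediate graph $G''$ with vertex set $V \times Q$ and edges labelled over a Dyck-$|\Gamma|$ alphabet. For each PDA transition $q \xrightarrow{a,\,\pm\gamma} q'$ I add edges simulating the move: when $a \in \Sigma$, one edge from $(u,q)$ to $(v,q')$ for each $(u,v) \in E$ with $\lambda(u,v) = a$; when $a = \eps$, one edge from $(u,q)$ to $(u,q')$ for every $u \in V$. Each such edge carries the open bracket of type $\gamma$ on a push and the matching close bracket on a pop. The designated source and target are $s'' = (s, q_0)$ and $t'' = (t, \qfinal)$. An $s''$-to-$t''$ path in $G''$ whose label lies in Dyck-$|\Gamma|$ then corresponds bijectively to an accepting run of $\PDA$ on a word that labels an $s$-to-$t$ path in $G$, since the label records the push/pop sequence, and such a sequence keeps the stack legal and empty at the end precisely when it is Dyck-$|\Gamma|$.

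The second step encodes Dyck-$|\Gamma|$ into Dyck-$2$. Because $\Lang$ is fixed, $|\Gamma|$ is a constant, so I may use a standard alphabetic homomorphism $\phi$ with $w$ Dyck-$|\Gamma|$ iff $\phi(w)$ Dyck-$2$ over $\BrTwo$ --- for instance $\phi(\text{open}_i) = \opr^i \ops$ and $\phi(\text{close}_i) = \cls \clr^i$. Subdividing each edge of $G''$ into a short path that spells out the corresponding Dyck-$2$ block yields the final graph $G'$ together with $s' = s''$ and $t' = t''$. Since $|Q|$ and $|\Gamma|$ are constants depending only on $\Lang$, $G'$ has $O(|V| + |E|)$ vertices and edges and is computable in time linear in the bit-size of the input.

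The construction is a mild variant of classical PDA-to-graph reductions, so I do not foresee a severe technical obstacle. The delicate points are to pin down the PDA normal form cleanly (single push/pop per step with acceptance by final state together with empty stack) and to verify that the chosen $\phi$ does characterise Dyck-$|\Gamma|$ via Dyck-$2$; once these are in place, the bijection between accepting runs of $\PDA$ and Dyck-$2$-labelled $s'$-to-$t'$ paths in $G'$ follows by a routine induction on path length.
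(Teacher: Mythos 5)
Your overall route matches the paper's: run a fixed PDA for $\Lang$ in product with $G$, treat each push/pop as an open/close bracket, and let the Dyck condition enforce stack legality; the normal-form choices (single push/pop per step, unique final state) differ from the paper's only cosmetically. The genuine gap is in your second step: the proposed homomorphism $\phi(\text{open}_i)=\opr^i\ops$, $\phi(\text{close}_i)=\cls\clr^i$ does \emph{not} characterise Dyck-$|\Gamma|$ via Dyck-$2$. Take $w=\text{open}_2\,\text{close}_1\,\text{open}_1\,\text{close}_2$, which is not a Dyck word (the close of type~$1$ cannot match the open of type~$2$ that precedes it). Its image
\[
\phi(w)=\opr\opr\ops\;\cls\clr\;\opr\ops\;\cls\clr\clr
\]
nevertheless \emph{is} a Dyck-$2$ word: the leading $\opr$ matches the trailing $\clr$, and between them sit two consecutive balanced blocks $\opr\ops\cls\clr$. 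The problem is that your ``sentinel'' $\ops/\cls$ sits on the \emph{inside} of each codeword, so the unary runs $\opr^i$ and $\clr^j$ on the outside can cross over to the wrong codeword. This false positive is not a mere curiosity: in your reduction it means a Dyck-$2$-labelled $s'$-to-$t'$ path in $G'$ need not correspond to a legal run of $\PDA$, so the reduction could map a no-instance to a yes-instance.

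The paper avoids the intermediate Dyck-$|\Gamma|$ stage entirely and encodes each $Z\in\Gamma$ by a \emph{fixed-length} word $\phi(Z)\in\OpenBr^\ell$ with $\ell=\lceil\log_2|\Gamma|\rceil$, taking $\psi(Z)$ to be the reversal of $\phi(Z)$ with opening brackets flipped to closing ones. Because all codewords have the same length, a Dyck-$2$ factorisation of an encoded walk is forced to pair $\phi(Z)$ with a $\psi(Z')$ in its entirety, and for $Z\neq Z'$ some position then has a bracket-type mismatch ($\opr$ against $\cls$ or $\ops$ against $\clr$). With this block code in place of your variable-length one, the rest of your argument goes through and is essentially the paper's proof.
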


We call an algorithm \emph{truly subcubic}
if it has (worst-case) running time $O(n^{3-\eps})$
for some constant $\eps > 0$,
where $n$ denotes the bit length of the input.
Practical implementations use a summarization-based $O(|V|^3)$ algorithm \cite{RHS95};
note that $|V| \le n$.
Using Rytter's trick \cite{Rytter85}, Chaudhuri~\cite{Chaudhuri} shows that 
the $\mathcal{L}$-reachability problem is $O(|V|^3/\log |V|)$ for any fixed context-free language.
However, no truly subcubic algorithm is known for this problem.
The best known conditional lower bound for the problem is has order~$|V|^\omega$.
On the other hand, Dyck-$1$ reachability (the language of balanced parentheses with one kind of parentheses)
can be solved in time $\tilde{O}(|V|^\omega)$~\cite{Bradford2018,BringmannPersonalCommunication,MathiasenP21},
matching best conditional lower bounds.

\section{Certificates for reachability and non-reachability}
\label{s:cert}

In this section we show that, while truly subcubic algorithms for
Dyck-2 reachability are not known,
solutions to Dyck-2 reachability have small and efficiently checkable certificates.

An instance $(G, \lambda, s, t)$ of \DyckTwoReach
is a yes-instance if there is a walk from $s$ to $t$ labeled with
a string from Dyck-2, and a no-instance otherwise.

\begin{mydefinition}
We say that \DyckTwoReach has \emph{subcubic
  certificates} for yes-instances (respectively, no-instances)
if, for some real number~$\eps > 0$, there is an algorithm $M$ and a
function $p(x) = O(x^{3-\epsilon})$ such
that for every instance $(G, \lambda, s, t)$ of \DyckTwoReach:
\begin{description}
\item[(completeness)]
if the instance is a yes-instance (respectively, no-instance), then
there is a string $u$ of length $p(|V|)$, called a \emph{certificate},
such that $M$ accepts $(G,\lambda,
s,t, u)$ in 
$p(|V|)$ time,
and
\item[(soundness)]
if the instance is a no-instance (respectively, yes-instance), 
then for every string $u$ of length $p(|V|)$,
the algorithm $M$ rejects $(G, \lambda, s,t, u)$ in
$p(|V|)$ time.
\end{description}
\end{mydefinition}
(Note that the running time of $M$ is
subcubic in $|V|$, which is at most the bit size of the instance,
and \emph{not} in the size of the certificate.)
That is, a subcubic certificate for yes- and no-instances allows us to verify, given the additional certificate, 
whether an instance of \DyckTwoReach is a positive or a negative instance in sub-cubic time.

We will refer extensively to \emph{walks} in labelled directed graphs.
For a labelled directed graph $(V,E,\lambda\colon E\rightarrow \BrTwo)$, 
a walk from $u\in V$ to $v\in V$ is a sequence of edges
$\pi:= e_0\ldots e_{k}$ from $E$, for $k\geq 0$, such that
for each $i \in \{1, \ldots, k\}$, edge $e_{i-1}$ arrives
at the same vertex that edge $e_i$ departs from,
and moreover $e_0$ departs from~$u$ and $e_{k}$ arrives at~$v$.
This walk is \emph{valid} if the word
$\lambda(e_0) \ldots \lambda(e_k)$ belongs to the Dyck-$2$ language.
A~subwalk of a walk $e_0\ldots e_k$ is a contiguous
subsequence $e_i\ldots e_j$ of edges, possibly empty.

\subsection{Certificates for yes-instances: compressed walks}
\label{s:cert:reach}

We describe our certificate system for yes-instances of \DyckTwoReach.
These certificates are witnesses for reachability.
We fix an instance of \DyckTwoReach:
$G = (V, E)$ a directed graph,
$\lambda \colon E \rightarrow \BrTwo$ an edge-labeling function,
and $s, t \in V$ source and target vertices.


A first attempt is to provide a valid walk as a certificate (witness).
However, it is well-known that the shortest valid walk can be
exponential in the size of the input, namely it can be of length
$\exp\Theta(|V|^2 / \log |V|)$, and this bound
is tight~\cite{P92}.
(For an intuition, one can think of a pushdown automaton accepting
 only words of length exponential in its size and longer.)
The main observation to get subcubic certificates is that there is
always some valid walk (including the shortest one in particular)
that is well-compressible and that has a small representation ($O(|V|^2)$
in the size of the graph) and it is
efficient to check (in time $O(|V|^2)$)
that such a compressed walk is indeed a valid walk.
Moreover, for every no-instance, one cannot get any valid walks,
compressed or otherwise.

The following definition ``inlines'' the concept of a straight-line program,
which is an ``acyclic'' context-free grammar that generates one word only.
Straight-line programs are at the core of general-purpose compression algorithms
such as LZ77 (see, e.g.,~\cite{Lohrey12}).

\begin{mydefinition}
For an instance of \DyckTwoReach,
denote by \AllNonTerminals a fresh copy of the set~$V^2$,
written as $\AllNonTerminals = \{ \nont u v \mid (u, v) \in V^2 \}$.
A \emph{walk scheme} is
a context-free grammar with the set of terminal symbols~$E$,
a set of nonterminal symbols~$\NT \sset \AllNonTerminals$,
and
the axiom $\nont s t \in \NT$,
where:
\begin{itemize}
\item
for each nonterminal $\nont u v \in \NT$
there is exactly one production, which moreover has the form:
\begin{enumerate}[(a)]
\item
$
\nont u v \to \nont u w \  \nont w v
$
for some $w \in V$,
or
\item
$
\nont u v \to e \; \nont x y \; f
$
for some edges $e = (u,x) \in E$ and $f = (y, v) \in E$
with $\lambda(e) \cdot \lambda(f) \in \{ \opr \clr, \ops \cls \}$,
or
\item
$
\nont u u \to \eps
$
for some $u \in V$, and
\end{enumerate}
\item
the directed graph with vertices \NT and the following set of edges
is acyclic:
\begin{equation}
\label{eq:graph-of-slp}
\left \{ \,
         (\nont a b, \nont c d) \mid
         \text{$\nont c d$ occurs on the right-hand side of
               the production of $\nont a b$}
\, \right \}.
\end{equation}
\end{itemize}
\end{mydefinition}

\begin{proposition}
Every walk scheme has size $O(|V|^2)$ and bit size $O(|V|^2 \log |V|)$.
\end{proposition}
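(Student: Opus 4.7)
The plan is to bound the size by two simple counts: the number of productions and the length of each production's right-hand side. First I would observe that $\NT \subseteq \AllNonTerminals$, and $|\AllNonTerminals| = |V|^2$ by construction, so the walk scheme has at most $|V|^2$ nonterminals. Next, since the definition stipulates that each nonterminal has \emph{exactly one} production, the total number of productions is also at most $|V|^2$.

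Then I would inspect the three allowed shapes of productions (a), (b), (c) and note that the right-hand side of each production has length at most $3$: two nonterminals in case (a), two terminals and one nonterminal in case (b), or the empty string in case (c). Summing over all nonterminals, the total number of symbol occurrences in the grammar is at most $3 |V|^2 = O(|V|^2)$, which gives the size bound.

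For the bit-size bound, I would note that each symbol appearing in a production is either a terminal in $E$ or a nonterminal in $\NT \subseteq \AllNonTerminals$; since $|E| \le |V|^2$ and $|\NT| \le |V|^2$, each such symbol can be encoded in $O(\log |V|)$ bits. Multiplying the $O(|V|^2)$ symbol occurrences by $O(\log |V|)$ bits per occurrence yields the claimed $O(|V|^2 \log |V|)$ bit-size bound.

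There is no real obstacle here — the statement follows directly from the definition, which caps both the number of nonterminals (via the inclusion $\NT \subseteq \AllNonTerminals$) and the length of each right-hand side (via the enumerated forms (a)--(c)). The acyclicity condition on the graph in~\eqref{eq:graph-of-slp} plays no role in this size estimate; it only ensures that the grammar derives a single finite word.
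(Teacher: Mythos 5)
Your proof is correct and is precisely the immediate counting argument the paper has in mind (the paper states the proposition without proof, treating it as a direct consequence of the definition): at most $|V|^2$ nonterminals each with exactly one production whose right-hand side has length at most $3$, and each symbol names an element of $E$ or of $\AllNonTerminals$, hence costs $O(\log|V|)$ bits.
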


\begin{theorem}
\label{th:cert-yes}
The following statements hold:
\begin{itemize}
\item
An instance of \DyckTwoReach is a yes-instance
if and only if
there exists a walk scheme for it.
\item
There is a deterministic algorithm that runs in time $O(|V|^2)$
and decides if a given grammar is a walk scheme
for a given instance of \DyckTwoReach.
\end{itemize}
\end{theorem}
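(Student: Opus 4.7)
The plan is to prove both directions of the equivalence and then describe the checker. For the backward direction ($\Leftarrow$), I would argue by induction along a reverse topological order of the dependency graph in \eqref{eq:graph-of-slp}: for every $\nont{u}{v} \in \NT$, the (unique) word over $E$ derivable from $\nont{u}{v}$ is the edge sequence of a walk from $u$ to $v$ whose label lies in the Dyck-$2$ language. Productions of type (c) supply the base case (empty walk, label $\eps$). For type (a), concatenating the two subwalks produces a walk from $u$ to $v$, and Dyck-$2$ is closed under concatenation. For type (b), the constraints on $e,f$ ensure the edges fit together into a walk $u \to x \to \cdots \to y \to v$, and its label has the form $\opr w \clr$ or $\ops w \cls$ for a Dyck-$2$ word $w$, hence again lies in Dyck-$2$. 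Applying the invariant to the axiom $\nont{s}{t}$ yields a valid walk from $s$ to $t$.

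For the forward direction ($\Rightarrow$), I would reconstruct a walk scheme from a yes-instance using saturation, in the spirit of the standard cubic algorithm for Dyck-$2$ reachability. Compute the relation $R \subseteq V^2$ of Dyck-$2$-reachable pairs as the least fixpoint under closure rules that correspond precisely to production types~(a), (b), (c); as each pair is added to~$R$, record the rule used and the operand pairs it used. Defining the production of $\nont{u}{v}$ to be the one recorded at the moment $(u,v)$ first enters $R$ yields a grammar in which every production only refers to pairs that appeared strictly earlier, so the induced dependency graph is acyclic. Finally, restrict $\NT$ to the nonterminals reachable from $\nont{s}{t}$ in that graph. Since at most one production is recorded per pair, $|\NT| \le |V|^2$, and $(s,t) \in R$ by the yes-instance assumption, so $\nont{s}{t}$ is a legitimate axiom.

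For the algorithmic statement, the checker performs two passes. First, it verifies that the axiom is $\nont{s}{t}$ and that each nonterminal carries at most one production of an allowed shape, checking the local constraints: for $\nont{u}{v} \to \nont{a}{b}\,\nont{c}{d}$ require $u=a$, $b=c$, $d=v$; for $\nont{u}{v} \to e \, \nont{x}{y} \, f$ require that $e=(u,x)$ and $f=(y,v)$ lie in $E$ and that $\lambda(e)\lambda(f) \in \{\opr\clr, \ops\cls\}$; for $\nont{u}{u} \to \eps$ require the endpoints to agree. After preprocessing $E$ and $\lambda$ into a lookup table, each production is checked in constant time. Second, the checker tests acyclicity of the dependency graph \eqref{eq:graph-of-slp} by DFS. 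Both the number of productions and the number of edges in the dependency graph are $O(|V|^2)$ (at most two nonterminal occurrences per right-hand side), so the whole checker runs in $O(|V|^2)$ time. The only subtlety worth highlighting is the quadratic size bound: it is built into the definition of a walk scheme, which allows at most one production per pair in $\AllNonTerminals$, so the saturation-based construction never exceeds $|V|^2$ productions.
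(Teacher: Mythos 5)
Your soundness direction (walk scheme implies yes-instance) and your verification algorithm are essentially the same as the paper's: induction along a reverse topological order for the former, and a format check plus a DFS-based acyclicity test for the latter. The completeness direction, however, takes a genuinely different route. The paper begins with the ``full'' grammar containing \emph{every} production of the three allowed shapes over $\AllNonTerminals$, observes that the chosen walk $\pi$ belongs to its language, and then prunes: it repeatedly applies an auxiliary lemma (Lemma~\ref{l:prune-alternatives}) showing that any one of the multiple productions for a nonterminal can be kept while preserving nonemptiness of the language, interleaved with standard removal of useless (unproductive or unreachable) nonterminals, and finally argues acyclicity \emph{after the fact} by a parse-tree contradiction. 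Your construction instead builds the grammar \emph{forwards} from the saturation trace: record, for each pair $(u,v)$, the rule and operands that first placed it in the reachability relation $R$, so every recorded production refers only to strictly earlier pairs, and acyclicity holds by construction rather than by contradiction. This avoids the pruning lemma entirely and is arguably more transparent; the paper's argument, on the other hand, is self-contained as a statement about an arbitrary grammar certifying nonemptiness and emphasizes that \emph{any} valid walk yields a walk scheme after pruning, not just the one produced by a particular saturation order. One small point: the definition of a walk scheme requires \emph{exactly} one production per nonterminal in $\NT$, so the checker should also reject nonterminals with \emph{zero} productions, not merely those with more than one --- a cosmetic fix in your phrasing, not a gap in the construction, since your saturation-based grammar automatically assigns one production to every retained nonterminal.
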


For the proof of Theorem~\ref{th:cert-yes},
we need the following auxiliary result.

\begin{mylemma}
\label{l:prune-alternatives}
Let $\mathcal G$ be a context-free grammar
with $L(\mathcal G) \ne \emptyset$.
Suppose $\mathcal G$ contains more than one production
with the same nonterminal on the left-hand side.
Then by removing all of them but one we can obtain a grammar
$\mathcal G'$ with $L(\mathcal G') \ne \emptyset$.
\end{mylemma}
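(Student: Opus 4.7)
The plan is to assign a ``minimum derivation cost'' to every productive nonterminal and use this ordering to pick a single production to keep for the distinguished left-hand side~$X$. Call a nonterminal $Y$ \emph{productive} in $\mathcal G$ if $Y \Rightarrow^* w$ for some terminal word~$w$; then $L(\mathcal G) \ne \emptyset$ is equivalent to the start symbol~$S$ being productive. For each productive $Y$, let $h(Y)$ denote the minimum number of rule applications in any derivation $Y \Rightarrow^* w$ with $w$ a terminal word; this is a positive integer.

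The key observation I would use is that for every productive $Y$ there exists a production $Y \to \gamma$ such that every nonterminal occurring in $\gamma$ is itself productive, and moreover $h(Y) = 1 + \sum_Z h(Z)$, where the sum ranges over the nonterminal occurrences $Z$ in $\gamma$. This production is just the rule applied at the root of any shortest terminal derivation from~$Y$.

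With this in hand, I would pick such a witnessing production $X \to \gamma_X$ for the distinguished $X$, delete all other productions for $X$, and let $\mathcal G'$ be the resulting grammar. I then want to show by induction on $h(Y)$ that every nonterminal $Y$ productive in $\mathcal G$ is still productive in $\mathcal G'$. For $Y \ne X$, all of $Y$'s original productions survive in $\mathcal G'$, so the witnessing production from the key observation is still available; all its nonterminal occurrences have strictly smaller $h$, hence by the inductive hypothesis are productive in $\mathcal G'$. For $Y = X$, the one surviving production $X \to \gamma_X$ plays the same role. Specializing to $Y = S$ gives $L(\mathcal G') \ne \emptyset$.

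The main point that requires care is the standard pitfall of such inductions: one must verify that the nonterminal children appearing in each chosen production are productive in the pruned grammar $\mathcal G'$, not merely in the original~$\mathcal G$. The strict inequality $h(Z) < h(Y)$ delivered by the key observation is exactly what makes the inductive hypothesis applicable and closes the loop. Beyond this bookkeeping I do not anticipate any other obstacle.
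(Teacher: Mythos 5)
Your argument is correct in its essential content and takes a genuinely different route from the paper's. The paper works by parse-tree surgery: it takes any successful derivation from the distinguished nonterminal $N$, selects a bottom-most $N$-labelled vertex in its parse tree, keeps the production $P$ at the root of the resulting $N$-free-below-root subtree $T_N$, and then, for any parse tree of a word in $L(\mathcal G)$, replaces every top-most $N$-labelled subtree by $T_N$, producing a $\mathcal G'$-valid parse tree in one shot. You instead assign a minimum derivation cost $h$ to each productive nonterminal and prove, by induction on $h$, the stronger claim that \emph{every} productive nonterminal of $\mathcal G$ stays productive in $\mathcal G'$. Your key inequality $h(Z) < h(Y)$ for each nonterminal occurrence $Z$ in the witnessing production for $Y$ is correct: it follows from $h(Y) = 1 + \sum_Z h(Z)$ together with $h \ge 1$. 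Your approach is arguably more elementary (no tree splicing) and more informative (it tracks productivity of all nonterminals, not just the start symbol). Incidentally, the shortest-derivation parse tree from $Y$ is automatically $Y$-free below its root, so the production you keep is a valid choice for the paper's $T_N$ as well; the two proofs diverge only in how they exploit that choice.

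One small gap remains: the lemma does not assume the distinguished nonterminal $X$ is productive, but your step ``pick such a witnessing production $X \to \gamma_X$'' requires $X$ to be productive --- otherwise no witnessing production exists. The paper dispatches this case first: if $X$ is unproductive, it cannot occur in any successful derivation, so retaining an arbitrary single production for $X$ leaves $L(\mathcal G)$ unchanged. You should add this case split; once you do, your induction goes through verbatim, since every productive $Y$ is then distinct from the unproductive $X$ and its witnessing production involves only productive nonterminals.
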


\begin{proof}[Proof of Theorem~\ref{th:cert-yes}]
We split the proof into three parts.

\paragraph*{Soundness.}
We first suppose that for a given instance of \DyckTwoReach
there exists a walk scheme, $\mathcal W$,
and show that the instance must be a yes-instance.
Consider the directed graph from
the acyclicity condition in the definition of walk schemes,
denote it $D$.
We will consider all vertices of~$D$, i.e., nonterminals from \NT,
in any reversed topological ordering. In other words,
whenever
               $\nont c d$ occurs on the right-hand side of
               the production of $\nont a b$,
we will consider $\nont c d$ before $\nont a b$.
We will show by induction that, for every $\nont u v \in \NT$,
the (one) word generated by $\nont u v$ is a valid walk
from~$u$ to~$v$.
(Recall that a walk is valid if it is labelled by a Dyck-2 word.)
Indeed, it suffices to consider the three types of productions:
\begin{enumerate}[(a)]
\item
for a production of the form
$
\nont u v \to \nont u w \  \nont w v
$,
we know from the inductive hypothesis that $\nont u w$ generates
a valid walk from $u$ to $w$, and $\nont w v$ a valid walk from
$w$ to $v$, so their concatenation is a valid walk from $u$ to $v$;
\item
for a production of the form
$
\nont u v \to e \; \nont x y \; f
$
with edges $e = (u,x) \in E$ and $f = (y, v) \in E$,
we know from the inductive hypothesis that $\nont x y$ generates
a valid walk from $x$ to~$y$, and since
$\lambda(e) \cdot \lambda(f) \in \{ \opr \clr, \ops \cls \}$,
the result of the concatenation is a valid walk from $u$ to $v$;
\item
finally, productions of the form
$
\nont u u \to \eps
$
correspond to trivial valid walks (containing no edges) and represent
the induction base.
\end{enumerate}
As the axiom of the grammar~$\mathcal W$ is $\nont s t$, we conclude
that there is a valid walk from $s$ to~$t$, which means that
the instance of \DyckTwoReach we consider is a yes-instance.

\paragraph*{Completeness.}
In the converse direction, let us prove that
that every yes-instance of \DyckTwoReach
has a walk scheme. Consider such an instance, $(G, \lambda, s, t)$,
and consider a walk from~$s$ to~$t$, call it~$\pi$.
We construct a walk scheme in several steps.

First consider a context-free grammar $\mathcal G$
with the set of terminal symbols~$E$,
set of nonterminal symbols~$\AllNonTerminals$,
and
axiom $\nont s t$.
The set of productions is determined as follows.
For each nonterminal $\nont u v \in \AllNonTerminals$,
we include all productions of the form:
\begin{itemize}
\item
$
\nont u v \to \nont u w \  \nont w v
$
for all $w \in V$; 
\item
$
\nont u v \to e \; \nont x y \; f
$
where $e = (u,x) \in E$ and $f = (y, v) \in E$
such that $\lambda(e) \cdot \lambda(f) \in \{ \opr \clr, \ops \cls \}$;
\item
$\nont u u \to \eps$
for all $u \in V$.
\end{itemize}
Induction on the structure of $\pi$ shows that
$\pi \in L(\mathcal G)$, so
$L(\mathcal G) \ne \varnothing$.

We can now
prune the set of productions of the grammar $\mathcal G$
using Lemma~\ref{l:prune-alternatives}, as well as
apply standard procedures of removing useless (non-productive or
unreachable) nonterminals in context-free grammars
(see, e.g.,~\cite[Section~7.1]{HopcroftMotwaniUllman}).
We perform these steps until all three have no effect on the grammar.
The resulting grammar $\mathcal W$ satisfies all conditions
in the definition of walk schemes, except possibly the acyclicity condition.
We claim that $\mathcal W$ must satisfy that condition too.
Indeed, the transformations applied so far ensure that $L(\mathcal W) \ne \emptyset$.
Let $\NT \sset \AllNonTerminals$ be the set of nonterminals of $\mathcal W$.
Assume for the sake of contradiction that
the directed graph with vertices \NT and edges~\eqref{eq:graph-of-slp}
contains a directed cycle. Let $\nont a b \in \NT$ be a vertex on this cycle.
Since all nonterminals of $\mathcal W$ are reachable and productive,
there exists a valid parse tree with respect to $\mathcal W$ that
contains a node labelled by~$\nont a b$.
By definition of the graph, and since every nonterminal in~$\mathcal W$
has exactly one production, this node has a descendant labelled with~$\nont a b$.
By the same reasoning, this descendant also has a descendant labelled with~$\nont a b$,
etc., which cannot be the case as the tree is finite.
This contradiction means that the graph must be acyclic, so
$\mathcal W$ is in fact a walk scheme.

\paragraph*{Verification algorithm.}
The condition~$\NT \sset \AllNonTerminals$
and the choice of the axiom can be checked in time $O(|V|^2)$.
The fact that there is exactly one production per nonterminal
can be checked under the same time constraints;
and so can the form of these productions
and compatibility with the instance of \DyckTwoReach.
Finally, depth-first search--based topological sort procedure
can be used to detect the existence of directed cycles;
it runs in time linear in the number of edges, which is at most
$|V|^2$.
\end{proof}

\begin{remark}
There is nothing special about Dyck-$2$ in the construction, and a similar certificate
can be constructed for any fixed CFG.
\end{remark}

We already mentioned a link to compressed words above.
Our proof of Theorem~\ref{th:cert-yes} finds
a context-free grammar that generates exactly one
word and has $O(|V|^2)$ nonterminals in Chomsky normal form.
Importantly, while it is in general a \PSPACE-complete problem to decide
whether such a compressed word is accepted by a pushdown automaton
(see, e.g., the survey~\cite[section~9.4]{Lohrey12} and references therein),
our grammar has special structure, leading to
an efficient verification algorithm.

\subsection{Certificates for no-instances: inductive invariants}
\label{s:cert:non-reach}

Fix an instance of \DyckTwoReach. 
For ease of notation, we will assume that $V = \{1,\ldots,|V|\}$.
A certificate for no-instances will be a \emph{separator},
as defined next.
Such a certificate is essentially an inductive invariant, certifying non-reachability.

Let $A_{\opr}$, $A_{\ops}$, $A_{\clr}$, $A_{\cls}$ be four $0$--$1$ matrices
of size $|V| \times |V|$ that are adjacency matrices for the graph $G$ restricted
to sets of edges with labels $\opr$, $\ops$, $\clr$, $\cls$, respectively.

\newcommand{\tobool}[1]{\mathrm{bool}(#1)}
For a nonnegative integer matrix $N$, denote
by $\tobool N$ the matrix obtained from $N$ by replacing
every nonzero element by~$1$.
Let $I$ denote the $|V| \times |V|$ identity matrix.
We write $A \le B$ for matrices $A = (a_{i j})$ and $B = (b_{i j})$
of the same size whenever $a_{i j} \le b_{i j}$ for all $i$, $j$.

\begin{mydefinition}
A \emph{separator} for an instance of \DyckTwoReach
is a sextuple of $|V| \times |V|$ matrices,
$(M_{S}, M_{S S}, M_{\opr S}, M_{\ops S},\br M_{\opr S \clr}, M_{\ops S \cls})$,
where all entries belong to $\{0, 1, \ldots, |V|^2\}$,
and moreover all entries of $M_S$ belong to $\{0, 1\}$,
and such that the following ten conditions are satisfied:
\begin{equation}
\label{eq:sep}
\begin{aligned}
I &\le M_S,                                    &
A_{\opr} \cdot M_S &= M_{\opr S},  &
A_{\ops} \cdot M_S &= M_{\ops S},
\\
M_S \cdot M_S &= M_{S S},          &
M_{\opr S} \cdot A_{\clr} &= M_{\opr S \clr},  &
M_{\ops S} \cdot A_{\cls} &= M_{\ops S \cls},
\\
\tobool{M_{S S}} &\le M_S,          & \phantom{\text{and}}
\tobool{M_{\opr S \clr}} &\le M_S,  &
\tobool{M_{\ops S \cls}} &\le M_S,
\ \text{ and }
(M_S)_{s,t} = 0,
\end{aligned}
\end{equation}
where $s$ and $t$ are the source and target vertex
in the instance of \DyckTwoReach.
\end{mydefinition}

\begin{proposition}
Every separator has $O(|V|^2)$ entries and bit size $O(|V|^2 \log |V|)$.
\end{proposition}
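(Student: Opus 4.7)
The plan is to proceed by direct counting of the entries and tallying the bits required to store each one; no structural property of separators beyond the stated size bounds is needed. First I would observe that a separator is defined as a sextuple of matrices, each of dimension $|V| \times |V|$, so the total number of entries is exactly $6|V|^2 = O(|V|^2)$, which establishes the first half of the proposition.

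For the bit-size bound, I would appeal to the stipulation in the definition that every entry lies in $\{0, 1, \ldots, |V|^2\}$. In binary this range can be encoded using $\lceil \log_2(|V|^2 + 1) \rceil = O(\log |V|)$ bits per entry. Multiplying the $O(|V|^2)$ entries by the $O(\log |V|)$ bits each, the total bit size is $O(|V|^2 \log |V|)$, as claimed.

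There is no substantive obstacle here, as the statement is essentially a restatement of the definition combined with the standard binary encoding of bounded integers. I would remark in passing that the entries of $M_S$ are in fact restricted to $\{0,1\}$ and thus fit in a single bit each; this refinement does not change the asymptotic bound, however, because the remaining five matrices may carry entries as large as $|V|^2$ and therefore dominate the total storage cost.
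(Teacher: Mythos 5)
Your proof is correct and is the straightforward counting argument the paper implicitly relies on (the proposition is left without an explicit proof since it follows directly from the definition). Your observation that $M_S$ is $0$--$1$ but the other matrices dominate the bit count is a nice touch, though asymptotically immaterial, as you note.
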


\begin{theorem}
\label{th:cert-no}
The following statements hold:
\begin{itemize}
\item
An instance of \DyckTwoReach is a no-instance
if and only if
there exists a separator for it.
\item
There is a deterministic algorithm that runs in time $O(|V|^\omega)$
and decides if a given sextuple of $|V| \times |V|$ matrices is a separator
for a given instance of \DyckTwoReach.
\item
There is a randomized algorithm that runs in time $O(|V|^2)$
and decides if a given sextuple of $|V| \times |V|$ matrices is a separator
for a given instance of \DyckTwoReach.
In the case it is, the algorithm never errs;
otherwise the algorithm flags an issue with probability $\ge 0.5$.
\end{itemize}
\end{theorem}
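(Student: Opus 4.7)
The plan is to interpret $M_S$ as an over-approximation of the Dyck-$2$ reachability relation, with the other five matrices playing the role of ``one-grammar-step'' witnesses: the matrices $M_{SS}$, $M_{\opr S \clr}$, $M_{\ops S \cls}$ record the pairs derivable by one application of a nontrivial production of the Dyck-$2$ grammar $S \to SS \mid \opr S \clr \mid \ops S \cls \mid \eps$, while the two ``half-step'' matrices $M_{\opr S}$, $M_{\ops S}$ store intermediate products. Under this reading, each product identity in~\eqref{eq:sep}, together with the matching $\tobool{\cdot} \le M_S$ inequality, expresses closure of $M_S$ under one production, and $I \le M_S$ corresponds to $S \to \eps$.

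For soundness, I would fix a separator and prove by induction on the length of a derivation $S \Rightarrow^* w$ in the Dyck-$2$ grammar that, whenever $w$ labels a walk from $u$ to $v$ in the graph, we have $(M_S)_{u,v} = 1$. The base case uses $I \le M_S$; each inductive case chains the relevant product identity with the corresponding Boolean inequality. Since $(M_S)_{s,t} = 0$, no such walk from $s$ to $t$ exists and the instance is a no-instance. For completeness, I would take $M_S$ as the $0$--$1$ matrix of the true Dyck-$2$ reachability relation and define $M_{\opr S} := A_{\opr} M_S$, $M_{\opr S \clr} := M_{\opr S} A_{\clr}$, $M_{SS} := M_S M_S$, and analogously for the square-bracket pair. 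The four product identities then hold by definition; the three Boolean-order inequalities follow from closure of Dyck-$2$ reachability under concatenation and under parenthesis wrapping; and entries of any product of at most three $0$--$1$ matrices on $|V|$ vertices lie in $\{0, \ldots, |V|^2\}$, as required.

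For the deterministic $O(|V|^\omega)$ verification, the checker performs the four matrix products via fast matrix multiplication and compares them with the supplied matrices; all remaining conditions ($I \le M_S$, the three $\tobool{\cdot} \le M_S$ inequalities, and $(M_S)_{s,t} = 0$) are entrywise checks costing $O(|V|^2)$. For the randomized $O(|V|^2)$ verification, I would invoke Freivalds' algorithm for each of the four product equalities: draw a uniform $r \in \{0,1\}^{|V|}$ and test, for instance, $A_{\opr}(M_S r) = M_{\opr S} r$ using three matrix--vector multiplications in $O(|V|^2)$. If the sextuple is genuinely a separator the test always passes; if some product identity fails, Freivalds catches it with probability at least $1/2$, matching the theorem statement. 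The Boolean-order inequalities and the single-entry check remain deterministic and cost $O(|V|^2)$.

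The main obstacle will be the completeness half of the characterization: showing that the products built from the true reachability relation both respect the entry bound $|V|^2$ and, after Booleanization, stay dominated by $M_S$ itself. This reduces to two explicit closure facts about valid Dyck-$2$ walks (concatenation and parenthesis wrapping), which I would isolate as a short preliminary lemma before naming the candidate separator.
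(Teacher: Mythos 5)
Your proposal is correct and follows essentially the same route as the paper: soundness is a structural induction on how a Dyck-2 word is derived, completeness takes $M_S$ to be the exact reachability matrix and reads off the remaining five matrices from the product identities, and the deterministic and randomized checkers are the obvious fast-matrix-multiplication and Freivalds-based verifications. Two small nits: you write ``four'' product equalities in two places although there are five (your own enumeration---three explicit products plus ``analogously for the square-bracket pair''---already yields five), and your randomized analysis is in fact slightly cleaner than the paper's: a single Freivalds run per product already suffices, since if the input is not a separator then either a deterministic check fails, or some product identity $A B = C$ fails and that one Freivalds test rejects with probability $\ge 1/2$, whereas the paper needlessly amplifies each check to error $1/16$ and union-bounds over the five products.
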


\begin{proof}
We split the proof into four parts.

\paragraph*{Completeness.}
First consider a no-instance of \DyckTwoReach.
Take the matrix $M_S = (m_{i j})$, where each $m_{i j}$ is
$1$ if there is a valid walk from vertex~$i$ to vertex~$j$.
It is clear that $m_{s t} = 0$, because the instance is a no-instance.
We now show that picking the other matrices
$M_{S S}, M_{\opr S}, M_{\ops S}, M_{\opr S \clr}, M_{\ops S \cls}$
so that all the five matrix equalities among the constraints~\eqref{eq:sep}
are satisfied leads to the satisfaction of the remaining
(four) inequality constraints. Indeed:
\begin{itemize}
\item 
$I \le M_S$ because for each vertex~$i$ the empty walk from~$i$ to~$i$
is valid;
\item
$\tobool{M_{S S}} \le M_S$ because the concatenation of two valid walks
is a valid walk;
\item
$\tobool{M_{\opr S \clr}} \le M_S$ and
$\tobool{M_{\ops S \cls}} \le M_S$ because
every walk $e \cdot \pi \cdot e'$ is valid whenever
$\pi$ is valid and $e$ and $e'$ are labelled by a matching
pair of parentheses, either \opr, \clr or \ops, \cls.
\end{itemize}
This shows that there is a separator for each no-instance.

\paragraph*{Soundness.}
In the converse direction, consider an arbitrary instance of \DyckTwoReach.
We show that for every valid walk~$\pi$
from a vertex~$u$ to a vertex~$v$ in the graph, all separators
must satisfy the condition $m_{u v} = 1$ where $M_S = (m_{i j})$.
(It then follows that yes-instances have no
 separators.)
We use induction on the label of walk~$\pi$, which is simply the concatenation of
individual edge labels:
\begin{itemize}
\item
The base case is the empty label, $\eps$.
The walk~$\pi$ must then be the empty walk, from some vertex~$u$ to itself.
We recall that $I \le M_S$ for every separator;
so indeed $m_{i i}$ must be set to $1$ for all vertices~$i$,
and for the chosen vertex $i = u$ in particular.
\item
If the walk~$\pi$ is labelled by $\alpha \cdot \beta$,
where both $\alpha$ and $\beta$ are nonempty Dyck-2 words,
then there exists a vertex $w$ such that $\pi = \pi' \cdot \pi''$
and $\pi'$ and $\pi''$ are valid walks from $u$ to $w$ and from $w$ to $v$,
respectively.
By the inductive hypothesis, $m_{u,w} = m_{w,v} = 1$.
Since $\tobool{M_{S S}} = \tobool{M_S \cdot M_S} \le M_S$,
we conclude that $m_{u,v} = 1$ in this case as well.
\item
Finally, suppose the label of the walk~$\pi$ is $\opr \alpha \clr$,
for some Dyck-2 word~$\alpha$.
(The case $\ops \alpha \cls$ is analogous.)
Then $\pi = e \cdot \pi' \cdot f$, where $e$ and $f$ are individual
edges, say from $u$ to $u'$ and from $v'$ to $v$ (for some $u', v' \in V$),
and $\pi'$ is a valid walk from $u'$ to $v'$.
The edges $e = (u, u')$ and $f = (v', v)$ have labels $\opr$ and $\clr$, respectively.
By the inductive hypothesis, $m_{u' v'} = 1$.
We now observe that
$\tobool{M_{\opr S \clr}} =
 \tobool{M_{\opr S} \cdot A_{\clr}} =
 \tobool{A_{\opr} \cdot M_S \cdot A_{\clr}} \le M_S$.
On the left-hand side, the matrix product has a positive entry in position $u v$,
because $\left(A_{\opr}\right)_{u,u'} =
         \left(A_{\clr}\right)_{v',v} = 1$
by the definition of $A_{\opr}$ and $A_{\clr}$.
Therefore $m_{u v} = 1$.
\end{itemize}
This concludes the proof of the first assertion of the theorem.

\paragraph*{Deterministic algorithm.}
The algorithm from the second assertion of the theorem verifies
all conditions in the definition of separator directly.
This means in particular five matrix multiplications where the factors
are matrices with elements from $\{0, \ldots, |V|\}$ (worst-case time $O(|V|^\omega)$),
four inequalities between individual matrices (worst-case time $O(|V|^2)$), and
a single equality constraint on one of the entries (constant time).

\begin{remark*}
This algorithm reduces
the verification of separators to 5~matrix
multiplications over the nonnegative integers.
While this result has complexity-theoretic consequences
(see Section~\ref{s:sowhat} below),
it may appear unsatisfactory,
as many theoretical algorithms for fast matrix multiplication are impractical.
This brings the randomized algorithm to the~fore.
\end{remark*}

\paragraph*{Randomized algorithm.}
The algorithm from the final assertion of the theorem
is the same as the previous one, except that
instead of \emph{computing} matrix multiplication it runs Freivalds' algorithm
for \emph{verifying} matrix multiplication~\cite{Freivalds79}.

Recall that Freivalds' algorithm
for verifying $A \cdot B = C$ for some $n\times n$ matrices
$A$, $B$, and $C$ proceeds by picking a $0$--$1$ vector $u \in \set{0,1}^n$ uniformly
at random and checking if $A \cdot (B u) = C u$. The algorithm runs in $O(n^2)$ time and 
has error probability $1/2$.
The properties of the algorithm are transferred directly to give a $O(|V|^2)$ bound.
Since we have five products to check, we reduce the error probability
in an individual check to $1/16$ by running it $4$~times,
so that the overall error probability is at most $5/16 \le 1/2$.
\end{proof}

\begin{remark}
For the deterministic verification algorithm,
it suffices to specify the $0$--$1$ matrix $M_S$ only, because
the other five matrices can be computed in time $O(|V|^\omega)$ from~it.
\end{remark}

\begin{remark}
Once again,
there is nothing special about the Dyck-2 language in our certificate system.
One can readily see that the conditions we impose on separators
correspond to the following context-free grammar for the Dyck-2 language:
\begin{align*}
S &\to S S \ |\ P \clr\ |\ Q \cls\ |\ \eps &
P &\to \opr S &
Q &\to \ops S
\enspace.
\end{align*}
Replacing this grammar with a different one,
we obtain 
a certificate system (for no-instances) for the CFL reachability problem
where the fixed CFL is represented by any fixed CFG.
\end{remark}

\begin{remark}
In a model of computation with unit-cost integer arithmetic,
integer matrix multiplication can be verified
in deterministic time $O(n^2)$~\cite{KorecW14}.
For RAM with
$O(\log n)$-bit arithmetic operations,
derandomization of Freivalds' algorithm
is an open problem even in the nondeterministic setting.
However, if the number of errors in the product is guaranteed to be $O(n^{2-\epsilon})$,
then a deterministic $O(n^{3-\epsilon})$-time algorithm is known~\cite{Kunnemann18}.
\end{remark}

\section{Complexity implications}
\label{s:sowhat}

\paragraph*{Complexity-theoretic summary of Section~\ref{s:cert}.}

%
Leaving out sharper bounds on certificate size and
$\polylog(n)$ factors
(required in the Turing model),
Theorems~\ref{th:cert-yes} and~\ref{th:cert-no} imply:

\begin{theorem}
\label{th:cert}
$
\DyckTwoReach
\in
         \NTIME   (n^2     ) \cap
       \CONTIME   (n^\omega) \cap
      \coMATIME_1 (n^2     )
$.
\end{theorem}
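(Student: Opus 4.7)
The plan is to assemble Theorem~\ref{th:cert} as a direct corollary of Theorems~\ref{th:cert-yes} and~\ref{th:cert-no}, with each conjunct tracked to one bullet point of those theorems. Since the input bit length~$n$ satisfies $|V|\le n$, every $|V|^2$ and $|V|^\omega$ bound immediately translates into $O(n^2)$ and $O(n^\omega)$, respectively (the remark preceding the theorem allows us to absorb $\polylog(n)$ factors required to read/write matrix entries on a Turing machine, so I would simply remind the reader of this convention up front and not grind through tape-simulation details).

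For the first conjunct, $\DyckTwoReach\in\NTIME(n^2)$, I would describe an NTM that on input $(G,\lambda,s,t)$ guesses a walk scheme $\mathcal W$ of bit size $O(|V|^2\log |V|)$ on its work tape (using the bound from the proposition following the definition of walk schemes), then runs the deterministic verification algorithm from the second bullet of Theorem~\ref{th:cert-yes}. Completeness and soundness come directly from the first bullet of Theorem~\ref{th:cert-yes}. For the second conjunct, $\DyckTwoReach\in\CONTIME(n^\omega)$, I would symmetrically describe an NTM for the complement: guess a sextuple of matrices of bit size $O(|V|^2\log|V|)$, then run the deterministic separator-verification algorithm of the second bullet of Theorem~\ref{th:cert-no}, whose correctness is furnished by the first bullet.

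For the third conjunct, $\DyckTwoReach\in\coMATIME_1(n^2)$, the plan is to use the same Merlin as for the $\CONTIME$ bound, namely a separator, but to replace Arthur with the randomized Freivalds-based verifier of the third bullet of Theorem~\ref{th:cert-no}. The key observation is that this verifier has the one-sided error profile matching $\coMATIME_1$: on a no-instance Merlin can send the honest separator (whose existence is given by completeness in Theorem~\ref{th:cert-no}), and Arthur then \emph{never} errs, so the probability of acceptance is~$1$; on a yes-instance, Theorem~\ref{th:cert-no} guarantees that \emph{no} sextuple satisfies the separator conditions, so any Merlin's string causes at least one of the matrix equalities to fail, and Freivalds' algorithm (boosted to error $\le 1/16$ per product by $4$ repetitions, union-bounded over the $5$ products) rejects with probability $\ge 1/2$.

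The only mildly non-routine step is bookkeeping the one-sided error direction so that it lines up with the ``co'' in $\coMATIME_1$ and with the ``$1$'' subscript (perfect completeness on the side Merlin is arguing for). I would make this explicit by stating: Arthur's protocol decides the \emph{non-}reachability predicate with perfect completeness and soundness $\le 1/2$, which is exactly the definition of $\coMATIME_1$ for \DyckTwoReach. Everything else is bookkeeping of polylog factors and of the inequality $|V|\le n$; no new combinatorial content is required beyond what Theorems~\ref{th:cert-yes} and~\ref{th:cert-no} already provide.
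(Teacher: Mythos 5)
Your proposal is correct and matches the paper's intent: the paper gives no separate proof of Theorem~\ref{th:cert} but states it as an immediate corollary of Theorems~\ref{th:cert-yes} and~\ref{th:cert-no}, absorbing $\polylog$ factors, and you reconstruct exactly that corollary with the three checkers you describe. One small imprecision worth tidying: on a yes-instance a dishonest Merlin could send a sextuple that satisfies all five matrix products but fails a deterministic side-condition (range, an inequality, or $(M_S)_{s,t}=0$), in which case Arthur rejects with probability~$1$ rather than via Freivalds — so the correct statement is that \emph{some} condition fails, and whichever type it is, rejection happens with probability $\ge 1/2$, which is what you need.
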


For this summary, we recall (cf.~\cite{Tell19})
that $L \in \MATIME_1(t)$
(Merlin-Arthur time, introduced by Babai~\cite{Babai85})
iff there exists a deterministic machine
$M$ that takes inputs
$x,y,z$ where $|y| = |z| = O(t(|x|))$, runs in time $O(t(|x|))$, and
such that for every $x$,
\begin{align*}
x \in L \; \Rightarrow \;  \exists y. \Pr_z[M(x,y,z) \mbox{ accepts}] = 1,\quad\quad
x \not\in L \; \Rightarrow \;  \forall y. \Pr_z[M(x,y,z) \mbox{ accepts}] \leq 1/2,
\end{align*}
where the probability is with respect to the uniform distribution
of $z$ in $\{0,1\}^{t(|x|)}$.
Finally, $\coMATIME_1(t)$ is
the class of \emph{complements} of languages in $\MATIME_1(t)$.

\paragraph*{Fine-grained complexity of \DyckTwoReach.}
Fine-grained complexity research shows that even small improvements
in (the exponent of) the running time of many algorithmic problems,
such as orthogonal vectors or edit distance, would automatically give
faster algorithms for Boolean satisfiability, \SAT%
~\cite{WilliamsSurvey}.
Would improvements over Chaudhuri's $O(n^3 / \log n)$-time algorithm for \DyckTwoReach
also have consequences for \SAT?
Here we show
that subcubic certificates
give an answer to this question.


In fine-grained complexity,
perhaps the most influential hypothesis,
and the ultimate source of many lower bounds,
is
the \emph{strong exponential-time hypothesis} ($\SETH$)~\cite{IP01}, stating (roughly) that
there is no algorithm for \SAT better than exhaustive enumeration.
The \emph{non-deterministic strong exponential-time hypothesis} ($\NSETH$)~\cite{Carmosino}
extends it further.

\begin{hypothesis}[$\SETH$]
For every $\varepsilon > 0$, there exists a $k$ so that $k$-$\SAT$ 
is not in $\DTIME[2^{n(1 - \varepsilon)}]$, where $k$-$\SAT$ is the language of all satisfiable Boolean
formulas in $k$-CNF. 
\end{hypothesis}

\begin{hypothesis}[$\NSETH$]
For every $\varepsilon > 0$, there exists a $k$ so that $k$-$\TAUT$ 
is not in $\NTIME[2^{n(1 - \varepsilon)}]$, where $k$-$\TAUT$ is the language of all Boolean tautologies
in $k$-DNF. 
\end{hypothesis}

In both hypotheses, $n$ is the number of variables.
It is unknown whether \SETH and \NSETH are true.
\NSETH implies \SETH, and \SETH implies $\P \ne \NP$.
Carmosino et al.~\cite{Carmosino} explore consequences of \NSETH and show that both proving
and refuting it would lead to interesting consequences.
In particular, \NSETH implies the \emph{absence} of fine-grained reductions from \SAT to a number of problems
and $\lnot \NSETH$ implies circuit lower bounds.

It turns out that,
because of our subcubic certificate systems (Section~\ref{s:cert}),
there exists no fine-grained reduction from \SAT (as well as from any $\SETH$-hard problem)
to \DyckTwoReach that would imply hardness beyond $n^\omega$, unless $\NSETH$ fails.

Because of space constraints, we relegate the formal definition
of fine-grained reductions
to Appendix~\ref{app:fgc}.
Intuitively,
a fine-grained reduction from $(L, t(n))$
to $(\DyckTwoReach, n^{c})$ 
means that, for every $\eps>0$, an $O(n^{c-\eps})$-time algorithm for \DyckTwoReach implies
a $O(t(n)^{1-\delta})$ algorithm for problem~$L$ for some $\delta = \delta(\eps)>0$.
This is not unlike usual Turing reductions (allowing multiple queries),
tracking the precise exponents in the running time bounds.
The following result is a consequence of Theorem~\ref{th:cert}.

\begin{theorem}\label{th:nseth-2dyck}
Unless $\NSETH$ fails, there is no fine-grained reduction from $(\SAT, 2^n)$
to $(\DyckTwoReach, n^{\omega+\gamma})$ for any $\gamma > 0$.
\end{theorem}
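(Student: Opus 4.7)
The plan is to convert the hypothesized fine-grained reduction into a nondeterministic algorithm for $k$-$\TAUT$ that runs in time $2^{n(1-\delta)}$ for every $k$ and some fixed $\delta > 0$, thereby contradicting $\NSETH$. The main ingredient is Theorem~\ref{th:cert}: every yes-instance of $\DyckTwoReach$ has an $O(N^2)$-time checkable certificate (a walk scheme), and every no-instance has an $O(N^\omega)$-time checkable certificate (a separator), where $N$ is the number of vertices of the query graph.

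Given a tautological $k$-DNF on $n$ variables, its De Morgan dual is an unsatisfiable $k$-CNF instance~$\phi$. The $\NTIME$-verifier for $k$-$\TAUT$ simulates the oracle reduction from $\SAT$ to $\DyckTwoReach$ on input~$\phi$. Each time the reduction issues a query~$I_j$, the verifier nondeterministically guesses an answer $a_j \in \{\textsc{yes},\textsc{no}\}$ together with a matching certificate (walk scheme or separator), runs the corresponding deterministic checker from Theorem~\ref{th:cert}, and, provided verification succeeds, resumes the simulation using~$a_j$. The verifier accepts iff the simulated reduction outputs ``$\phi\notin\SAT$''. Soundness follows from soundness of both certificate systems: every accepting branch supplies correct oracle answers, so the simulation faithfully reproduces the true reduction behaviour on~$\phi$. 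Completeness follows symmetrically, using completeness of both systems together with the fact that $\phi$ is unsatisfiable. Since fine-grained reductions may be adaptive, guessing, checking, and simulation are interleaved online rather than batched up front; this is safe because each certificate can be verified locally before the next oracle query is issued.

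Let $N_1, \ldots, N_m$ denote the sizes of the queries produced on input~$\phi$. Simulating the reduction itself takes time $O(2^{n(1-\delta_0)})$ for some $\delta_0(\gamma) > 0$ by the reduction hypothesis, and certificate verification contributes $\sum_j O(N_j^\omega)$, which dominates the $O(N_j^2)$ cost of the yes-instance checks. Applying the definition of fine-grained reduction with slack parameter $\eps := \gamma$ yields
\[
\sum_j N_j^{\omega} \;=\; \sum_j N_j^{(\omega+\gamma)-\gamma} \;\le\; (2^n)^{1-\delta_1}
\]
for some $\delta_1(\gamma) > 0$. This numerical balancing --- trading the verifier's $n^\omega$-time separator check against the conjectured lower bound $n^{\omega+\gamma}$ --- is the technical heart of the argument and is precisely where the hypothesis $\gamma > 0$ is used; without strict slack, the verification cost would swallow the savings offered by the certificates. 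Taking $\delta := \tfrac12\min(\delta_0,\delta_1)$, the verifier runs in nondeterministic time $O(2^{n(1-\delta)})$ uniformly in~$k$, refuting $\NSETH$.
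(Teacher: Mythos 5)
Your proof unfolds the argument behind Carmosino et al.'s Theorem~2/Corollary~2 (cited in the paper as Theorem~\ref{th:carmosino-nseth}), whereas the paper itself treats that theorem as a black box: it observes that Theorem~\ref{th:cert} places $\DyckTwoReach$ in $\NTIME(n^\omega)\cap\CONTIME(n^\omega)$ and then invokes \cite{Carmosino} to conclude in one line. Your route is self-contained and shows concretely where the two certificate systems from Section~\ref{s:cert} are consumed, which is useful exposition, though it reproves a generic fact rather than simply instantiating it.

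There is, however, an arithmetic slip at precisely the step you flag as the technical heart. The fine-grained-reduction condition bounds $\sum_j \bigl(T_2(N_j)\bigr)^{1-\eps}$, which for $T_2(m)=m^{\omega+\gamma}$ equals $\sum_j N_j^{(\omega+\gamma)(1-\eps)}$ --- the slack $\eps$ acts multiplicatively on the exponent, not subtractively. With your choice $\eps := \gamma$ (which already presupposes $\gamma < 1$), the controlled exponent is $(\omega+\gamma)(1-\gamma) = \omega - \gamma(\omega-1+\gamma) < \omega$, since $\omega \ge 2$. So the FGR condition bounds $\sum_j N_j^{(\omega+\gamma)(1-\gamma)}$, which does \emph{not} dominate $\sum_j N_j^\omega$, and the displayed inequality does not follow. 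The fix is to take $\eps := \gamma/(\omega+\gamma)$ (or any smaller positive value), for which $(\omega+\gamma)(1-\eps) = \omega$ exactly; the FGR condition then delivers $\sum_j N_j^\omega \le (2^n)^{1-\delta_1}$ for the $\delta_1 = \delta(\eps) > 0$ supplied by the reduction. With that substitution, the rest of your argument --- the interleaved guess-check-simulate loop and the final $\delta := \tfrac12\min(\delta_0,\delta_1)$ --- goes through and matches the intent of the cited result.
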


\section{Certificates for pushdown non-reachability}
\label{s:pda-unreach}

While CFL reachability is a central problem in program analysis,
an analogous problem in model checking is \emph{pushdown reachability} \cite{BEM97,FinkelWW97,BouajjaniEFMRWW00,Schwoon},
formalized as follows.

We are given a pushdown automaton (PDA) $\PDA = (Q, \Gamma, \Delta)$, where
$Q$ is a finite set of states,
$\Gamma$ is a finite alphabet of stack symbols,
and
$\Delta \subseteq
 (Q \times \Gamma) \times
 (Q \times \Gamma^{\le 2})$ is a set of transitions,
and
an initial configuration $(q_0, \gamma_0)\in Q\times \Gamma$. 
We are additionally given a regular set of configurations
$R$ specified by a \emph{$\PDA$-automaton}:
this is a usual, $\eps$-free nondeterministic finite automaton (NFA)
over the alphabet $\Gamma$ in which
the set of control states is $S \supseteq Q$ and the transition
relation is $\delta \sset S \times \Gamma \times S$.
A set of final states, $F \sset S$, is usually taken to be disjoint
from $Q$.
Such a \PDA-automaton is said to accept a configuration $(q, w) \in Q \times \Gamma^*$
of the PDA~\PDA iff there is a walk from control state~$q$ to some $\bar q \in F$
labelled by the word~$w$; in other words, if $w$ is accepted by this
NFA when started from~$q$ as initial state.
We ask if the PDA~$\PDA$ has a run from $(q_0, \gamma_0)$ to some configuration from~$R$.

We adapt our certificate system
to pushdown reachability.
For yes-certificates of size $O(|\Gamma| |S|^2)$,
we can convert the PDA to an equivalent CFG using the standard triplet construction
(see, e.g.,~\cite[Chapter~6]{HopcroftMotwaniUllman})
and repeat the second half of the completeness argument from Subsection~\ref{s:cert:reach}.
Explicitly, a certificate is a ``sub-grammar'' of this CFG that
is a straight-line program.

We now show how to certify that a given initial configuration
\emph{cannot reach} any configuration from a given regular set~$R$.
%
%
The classic saturation algorithm for computing \prestar,
the set of (reflexive, transitive) predecessors of configurations in~$R$, takes
a \PDA-automaton $\mathcal A$ as input and iteratively adds transitions
to it by the following rule:
\begin{equation}
\label{eq:addtrans}
\text{
\PDA has transition $(p, A) \to (q, w)$,
$\mathcal A$ has walk $q \stackrel{w}{\longrightarrow} s$}
\,\Rightarrow
\text{
add transition $p \stackrel{A}{\longrightarrow} s$ to $\mathcal A$.
}
\end{equation}
By the following claim, saturation under~\eqref{eq:addtrans}
implies
overapproximation of \prestar.
The converse inclusion is more subtle and will not be required.
\begin{claim}%
[see, e.g., Carayol and Hague~{\cite[Section~3.2]{CarayolH14}}]
A \PDA-automaton $\mathcal A$ accepts all configurations from~\prestar
if (i) it contains all transitions of the original \PDA-automaton
and (ii) it is saturated, i.e., applying rule~\eqref{eq:addtrans} does not change the transition relation.
\end{claim}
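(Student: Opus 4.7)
The plan is to prove the claim by induction on the length of the shortest run of the PDA~$\PDA$ from the configuration $(q, w)$ to a configuration in~$R$. The induction hypothesis will state that every configuration reaching $R$ in at most~$k$ steps is accepted by the saturated \PDA-automaton~$\mathcal A$. This is the natural direction because each single PDA step corresponds exactly to one application of the saturation rule~\eqref{eq:addtrans}, so the induction tracks PDA steps and NFA walks in lockstep.

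For the base case ($k = 0$), the configuration $(q, w)$ already lies in $R$, hence it is accepted by the original \PDA-automaton; by condition~(i) it is therefore accepted by~$\mathcal A$ as well. For the inductive step, consider a configuration $(p, Av)$ with $A \in \Gamma$, $v \in \Gamma^*$, that takes a PDA transition $(p, A) \to (q', u)$ (with $|u| \le 2$) to reach $(q', uv)$, and from there reaches~$R$ in at most $k$ further steps. By the induction hypothesis, $\mathcal A$ accepts $(q', uv)$, so there is a walk in $\mathcal A$ of the form $q' \xrightarrow{u} s \xrightarrow{v} f$ for some state $s$ and some final state $f \in F$. Now condition~(ii) (saturation) together with rule~\eqref{eq:addtrans} applied to the PDA transition $(p, A) \to (q', u)$ and the walk $q' \xrightarrow{u} s$ guarantees that $\mathcal A$ contains the transition $p \xrightarrow{A} s$. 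Concatenating yields a walk $p \xrightarrow{A} s \xrightarrow{v} f$, so $\mathcal A$ accepts $(p, Av)$, completing the induction.

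The main obstacle, such as it is, is notational: one must be careful about the degenerate case $u = \varepsilon$, where the walk $q' \xrightarrow{u} s$ is the empty walk and we must set $s = q'$ (so the rule just adds $p \xrightarrow{A} q'$), and the case $|u| = 2$, where the walk $q' \xrightarrow{u} s$ passes through an intermediate state. Both are handled uniformly by reading~\eqref{eq:addtrans} as a statement about walks of arbitrary (possibly zero) length in~$\mathcal A$. Once this is set up, the decomposition of the accepting walk for $(q', uv)$ into a $u$-prefix and a $v$-suffix is unambiguous, and the argument goes through cleanly.
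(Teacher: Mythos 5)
Your proof is correct. Note, however, that the paper does not actually prove this claim: it is cited as a known result (attributed to Carayol and Hague), so there is no in-paper proof to compare against. Your argument is the standard textbook one for the soundness of the $\mathrm{Pre}^*$ saturation procedure (appearing, e.g., in Bouajjani--Esparza--Maler and in Carayol--Hague): induction on the length of the shortest PDA run reaching~$R$, with the base case discharged by condition~(i) and the inductive step discharged by decomposing the accepting walk for $(q', uv)$ at position $|u|$ and invoking condition~(ii). The one-PDA-step-to-one-rule-application correspondence you highlight is exactly why the induction tracks the two in lockstep, and your uniform treatment of $|u| \in \{0,1,2\}$ (reading the rule as a statement about walks of arbitrary nonnegative length, with $s = q'$ when $u = \varepsilon$) is the correct way to avoid a case split. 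The one remark worth making is that the claim is only the soundness direction ($\mathrm{Pre}^*(R) \subseteq L(\mathcal A)$), which is what the paper needs and exactly what you prove; the paper explicitly says the converse is not required.
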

Our certificate system for non-reachability relies on the observation
that the update rule~\eqref{eq:addtrans} can be expressed using
matrix multiplication.
A \emph{certificate} is a finite family of matrices,
$M^A$,
$M^{A,B}$,
$M^{A,B,C}_1$,
$M^{A,B,C}_2$,
for all $A, B, C \in \Gamma$,
satisfying the following conditions:
\begin{equation}
\label{eq:pda-sep}
\begin{aligned}
P^A &\le M^A,
&
&(M^{\gamma_0})_{q_0,f} = 0 & \text{for all\ }&\text{$f \in F$,}
\\
        T^{A,\eps} &\le M^A, \\
\tobool{M^{A,B}} &\le M^A, &
M^{A,B} &= T^{A,B} \cdot M^B, \\
\tobool{M^{A,B,C}_2} &\le M^A, &
M^{A,B,C}_1 & = T^{A,BC} \cdot M^B, &
M^{A,B,C}_2 & = M^{A,B,C}_1 \cdot M^C,
\end{aligned}
\end{equation}
where
we assume with no loss of generality that $S = \{1, \ldots, |S|\}$
and
denote by $P^A$ the $A$-transition matrix of the original \PDA-automaton
and, for all $A \in \Gamma$, $w \in \Gamma^{\le 2}$,
by $T^{A,w} = (t_{i j}^{(A,w)})$ the $0$--$1$ matrix of size $|S| \times |S|$
in which $t_{i j}^{(A,w)} = 1$ if $i, j \in Q$ and \PDA contains
a transition $(i, A) \to (j, w)$.
The following proposition summarises the properties of this system:
\begin{proposition}
\label{p:pushdown-cert}
Certificates have $O(|\Gamma|^3 |S|^2)$ entries.
An instance of PDA emptiness is a no-instance
iff there exists a certificate for it.
The conditions can be verified by
a deterministic algorithm with running time $O(|\Gamma|^3 |S|^\omega)$ or
a randomized algorithm with running time $O(|\Gamma|^3 |S|^2)$ that
accepts valid certificates with probability one and rejects invalid ones with probability $\ge 0.5$.
\end{proposition}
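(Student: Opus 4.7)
The plan is to mirror Theorem~\ref{th:cert-no}, with the Dyck-2 grammar replaced by the saturation update rule~\eqref{eq:addtrans}. The intended semantics is that each $M^A$ encodes the $A$-transition matrix of a saturated \PDA-automaton, while the auxiliary matrices $M^{A,B}$ and $M^{A,B,C}_1, M^{A,B,C}_2$ witness, via matrix products, the composition of a single PDA rule $(p,A)\to(q,w)$ with a $w$-labelled walk $q\xrightarrow{w}s$, one case per length $|w|\in\{0,1,2\}$.

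For \textbf{completeness}, I would run the classical saturation procedure on the given \PDA-automaton to its fixpoint $\bar{\mathcal A}$ and define $M^A$ to be the $0$--$1$ $A$-transition matrix of $\bar{\mathcal A}$; the five matrix equations in~\eqref{eq:pda-sep} then force the auxiliary matrices uniquely, with entries bounded by $|S|^2$. The containment $P^A\le M^A$ and the three $\tobool{\cdot}\le M^A$ inequalities hold by construction, expressing that $\bar{\mathcal A}$ contains all original transitions and is closed under~\eqref{eq:addtrans} for stack words of lengths $0$, $1$, and $2$. The final condition $(M^{\gamma_0})_{q_0,f}=0$ for all $f\in F$ says $\bar{\mathcal A}$ does not accept $(q_0,\gamma_0)$; this follows from the standard correctness of the saturation algorithm (the fixpoint accepts \emph{exactly} $\prestar$, see e.g.~\cite{BEM97,CarayolH14}) combined with $(q_0,\gamma_0)\notin\prestar$ for a no-instance.

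For \textbf{soundness}, given a certificate, I would consider the \PDA-automaton $\mathcal A^{*}$ whose $A$-labelled transitions are $\{(i,j):(M^A)_{i,j}\ge 1\}$. The inequality $P^A\le M^A$ places all original transitions into $\mathcal A^{*}$, while the remaining inequalities, combined with the equations defining the auxiliary matrices, say exactly that $\mathcal A^{*}$ is closed under~\eqref{eq:addtrans} --- one inequality per $|w|\in\{0,1,2\}$. The claim preceding Proposition~\ref{p:pushdown-cert} then guarantees that $\mathcal A^{*}$ accepts every configuration in $\prestar$, so the condition $(M^{\gamma_0})_{q_0,f}=0$ forces $(q_0,\gamma_0)\notin\prestar$, i.e., a no-instance.

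For the \textbf{verification algorithm}, each of the $O(|\Gamma|^3)$ conditions in~\eqref{eq:pda-sep} is tested directly: matrix products on $|S|\times|S|$ matrices in $O(|S|^\omega)$ time deterministically, or via a constant number of Freivalds tests~\cite{Freivalds79} in $O(|S|^2)$ time randomly, driving the cumulative error probability below $1/2$. Summing yields totals of $O(|\Gamma|^3|S|^\omega)$ and $O(|\Gamma|^3|S|^2)$, respectively; the size bound $O(|\Gamma|^3|S|^2)$ follows by counting matrices and entries. The main obstacle I anticipate is bookkeeping: pairing each of the three forms of~\eqref{eq:addtrans} with the correct matrix equation, and in particular verifying that $T^{A,BC}\cdot M^B\cdot M^C$ faithfully composes a length-two PDA rule with a two-step walk through the intermediate vertex, without over-counting when auxiliary-matrix entries exceed~$1$.
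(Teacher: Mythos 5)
Your proof is correct and takes essentially the same approach as the paper, which merely records the key observation (decomposing rule~\eqref{eq:addtrans} into three matrix-update forms according to the length of the pushed word $w\in\{\eps\}\cup\Gamma\cup\Gamma^2$) and delegates completeness, soundness and the verification algorithms to the parallel argument of Theorem~\ref{th:cert-no}. One subtlety you handle correctly that the paper's terse proof leaves implicit: for completeness you invoke the \emph{full} correctness of saturation (the fixpoint accepts \emph{exactly} \prestar), which is genuinely needed to conclude $(M^{\gamma_0})_{q_0,f}=0$ for the saturated automaton of a no-instance, whereas the paper's displayed claim states only the overapproximation direction (the one used for soundness) and even remarks that the converse ``will not be required.'' Your anticipated over-counting concern for $T^{A,BC}\cdot M^B\cdot M^C$ is indeed harmless, since the inequality constraints in~\eqref{eq:pda-sep} apply $\tobool{\cdot}$ to the auxiliary products before comparing them to the $0$--$1$ matrix $M^A$, so only the support of the product matters.
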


Matrix constraints of Eq.~\eqref{eq:pda-sep}
define a \emph{backwards invariant} for the pushdown system \PDA in question,
an overapproximation of the set of configurations from which $R$ is reachable.

\section{Discussion: Fine-grained landscape and a hardest \DyckTwoReach instance} 
\label{s:fl}

In interprocedural program analysis,
the lack of algorithms
with running time $O(n^{3 - \eps})$
is referred to as ``the cubic bottleneck''.
Heintze and McAllester \cite{HeintzeMcAllester} captured this phenomenon
by the class of ``2NPDA-complete'' problems.
Here ``2NPDA'' stands for \emph{two-way nondeterministic pushdown automata,}
a model of computation that extends standard PDA
with the ability to move back and forth on the (read-only) input tape~\cite{AhoHopcroftUllman68}.
A problem is 2NPDA-complete (following Neal~\cite{Neal})
if it is subcubic equivalent to \emph{2NPDA recognition}:
given a word, does it belong to the language of a fixed 2NPDA.
Heintze and McAllester show a number of 2NPDA-complete problems, including
ground monadic rewriting reachability (see also \cite{Neal}), data flow reachability,
control flow reachability, and certain (non-)typability problems.
Melski and Reps~\cite{MelskiReps} show
a reduction from CFL reachability to data flow reachability and set constraints (and thus to 2NPDA recognition)
and a reverse reduction from data flow reachability to an instance of CFL reachability
where the language is \emph{not} fixed.

The following result appears to be folklore but is not found in the literature,
strengthening the reduction of Melski and Reps to show hardness of CFL
reachability for the \emph{fixed} Dyck-$2$ language.
The equivalence between problems (1) and (2) is sketched by Chaudhuri~\cite{Chaudhuri}.
While we state the result for PDA emptiness, one can equivalently (or additionally) state it for pushdown reachability.
We provide full proofs in the appendix.

\begin{proposition}
\label{prop:eq}
The following problems either all have truly subcubic
algorithms, or none of them do:
(1) 2NPDA language recognition,
(2) PDA language emptiness, and
(3) \DyckTwoReach.
\end{proposition}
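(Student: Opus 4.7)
The plan is to prove the proposition by a cycle of fine-grained reductions $(3) \le (2) \le (1) \le (3)$ in which each step blows up the input size by at most a constant factor. Since subcubic runtime exponents are preserved under linear-blowup reductions, either all three problems admit truly subcubic algorithms or none do. I sketch the two easy directions first and then address the nontrivial one.

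For (3) $\le$ (2), given a \DyckTwoReach instance $(G,\lambda,s,t)$, I would build a PDA $\PDA$ with state set $V\cup\{q_{\mathrm{acc}}\}$ and stack alphabet $\{\bot,\opr,\ops\}$ whose transitions simulate walks in~$G$: from state $u$, an edge $(u,v)\in E$ triggers a push or pop of the bracket indicated by $\lambda(u,v)$ while moving to state $v$, with $\eps$-transitions from $t$ and empty stack leading to $q_{\mathrm{acc}}$. Then $L(\PDA)\ne\emptyset$ iff the instance is a yes-instance, and $|\PDA|=O(|V|+|E|)$. For (1) $\le$ (3), fix a 2NPDA $\twonpda$; given input $w$ of length $n$, I would build the standard configuration graph on $O(n)$ vertices indexed by (control state, head position) with edges labelled in $\BrTwo$ encoding push/pop stack actions, where each stack symbol of $\twonpda$ is represented by a fixed-length matched bracket pattern (permissible since $|\twonpda|$ is constant). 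Valid Dyck-2 walks then correspond exactly to accepting runs of $\twonpda$ on~$w$.

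For (2) $\le$ (1), I would exhibit a single universal 2NPDA $\twonpda_0$ that takes a linear-size encoding $\langle\PDA\rangle$ of any PDA $\PDA$ as input and nondeterministically simulates an accepting run of $\PDA$: at each step it uses two-way head motion to scan the encoded transition table on the tape for an applicable move, then updates its own stack playing the role of $\PDA$'s stack. Then $\twonpda_0$ accepts $\langle\PDA\rangle$ iff $L(\PDA)\ne\emptyset$, and the input to $\twonpda_0$ has length linear in $|\PDA|$.

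The main obstacle is this last step: one must design \emph{one} fixed 2NPDA that uniformly handles arbitrary PDAs with unbounded state and stack alphabets. The key is that a 2NPDA, by virtue of two-way read access to its input together with a single stack, can act as a universal interpreter --- locating transitions of $\PDA$ via back-and-forth tape scans and maintaining $\PDA$'s simulated stack on its own stack, interleaving short bouts of finite-state bookkeeping (e.g.\ comparing arbitrary-length identifiers by position-by-position two-way scans) with simulation moves --- provided the encoding $\langle\PDA\rangle$ cleanly delimits state and stack-symbol identifiers. This is essentially the reverse of Cook's cubic simulation of 2NPDA and is folklore; spelling it out with an explicit encoding closes the cycle and yields the claimed subcubic equivalence.
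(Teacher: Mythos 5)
Your plan is correct and establishes the same three-way subcubic equivalence, but it runs the cycle of reductions in the opposite direction from the paper: you reduce $(3) \to (2) \to (1) \to (3)$, whereas the paper reduces $(1) \to (2) \to (3) \to (1)$. Concretely, the paper's chain is (a) a product construction turning a 2NPDA~$\A$ plus a word~$w$ into a PDA of size $O(|w|)$ whose emptiness decides acceptance; (b) an encoding of a PDA's transition diagram as a labelled graph, mapping each stack symbol to a length-$\lceil\log|\Gamma|\rceil$ bracket pattern, reducing PDA emptiness to \DyckTwoReach; and (c) a fixed 2NPDA that reads a binary-encoded graph and nondeterministically simulates a Dyck-2 walk, keeping $\sigma\in\OpenBr^*$ at the bottom of its stack and the current vertex at the top. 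Your step (1)~$\le$~(3) fuses the paper's (a) and (b) into one direct configuration-graph construction, and your step (3)~$\le$~(2) is essentially the inverse of (b); both are routine. The genuinely different ingredient is your step (2)~$\le$~(1), a universal 2NPDA~$\A_0$ simulating an arbitrary PDA's emptiness directly. This is the riskiest part and your sketch glosses over a subtlety: to compare the encoding of $\PDA$'s top stack symbol (living on $\A_0$'s stack) against a candidate transition on the tape, $\A_0$ must pop that encoding, which is destructive. This is nonetheless fine \emph{because} $\A_0$ is nondeterministic --- it can commit to a guessed transition, pop-and-compare character by character (rejecting on mismatch), and then push the replacement string --- and this is exactly the same pop-in-sync-with-tape-scan device the paper's 2NPDA~$\Rp$ uses in its proof of Lemma~\ref{l:dyck-2-reach-to-2npda-rec}. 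You should also decide explicitly where $\PDA$'s current control state lives in the simulation (e.g.\ encoded at the top of $\A_0$'s stack above the encoded stack frame), since, unlike the paper's $\Rp$ for \DyckTwoReach, your $\A_0$ has to track an unbounded state set. One thing the paper's orientation buys that yours does not is that its reduction (3)~$\to$~(1) is precisely what produces the hardest 2NPDA language of Theorem~\ref{th:hardest}; if that by-product is desired, the paper's direction is the one to follow.
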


\begin{proof}[Proof (sketch)]
We show three reductions:
\begin{itemize}
\item
In 2NPDA recognition to PDA emptiness,
each control state of the PDA remembers the position of the 2NPDA on the input tape
and the control state of the 2NPDA. The size of PDA is linear in the length of the
input word, because the 2NPDA is fixed.
\item
In PDA emptiness to \DyckTwoReach, 
the graph mimics the transition diagram of the PDA.
Stack symbols from $\Gamma$ are encoded by sequences of opening parentheses
of two kinds of length $\lceil \log |\Gamma| \rceil$.
Push transitions are modelled by sequences of edges with these labels,
and pop transitions by sequences with matching closing parentheses.
The reduction is linear-time, because the bit size of the PDA accounts
for the $\log |\Gamma|$ factor.
\item
In the last reduction, we give a fixed 2NPDA that solves \DyckTwoReach. 
The 2NPDA guesses a path through the graph, maintaining at the bottom of the stack
a sequence $\sigma \in \OpenBr^*$, and the current vertex at the top of the stack.
The length of the input word is proportional to the bit size of the graph (adjacency lists).
\qedhere
\end{itemize}
\end{proof}

As a corollary, all of these problems have subcubic certificate schemes,
and an analogue of Theorem~\ref{th:cert} holds for them too
(worked out for PDA emptiness in Section~\ref{s:pda-unreach}).
Theorem~\ref{th:nseth-2dyck} on the absence of \SETH-hardness also
extends to PDA emptiness and 2NPDA recognition.

For upper bounds, note that 2NPDA recognition is solvable in time $O(|w|^3/\log |w|)$~\cite{Rytter85},
and language emptiness for PDA in time $O(n^3/\log n)$\footnote{%
	The reduction of Proposition~\ref{prop:eq}, combined with Chaudhuri's algorithm
        for CFL reachability~\cite{Chaudhuri}, implies
	a $O(n^3/\log n)$ bound for PDA emptiness where $n$ is the bit size of the input.
        (We give a sketch in Appendix~\ref{app:chaudhuri}.)
	In contrast, ``textbook'' algorithms for PDA emptiness go through equivalent context-free grammars~\cite{HopcroftMotwaniUllman}, for which
	a cubic blow-up is unavoidable in the worst case~\cite{GoldstinePW82}.
}.

We observe that the hardness of 2NPDA recognition is witnessed by a single ``hardest'' 2NPDA language:
recognition for an arbitrary 2NPDA can be reduced to a single 2NPDA.
Suppose some 2NPDA \A over $\Sigma$ is given and the input to 2NPDA recognition for $\A$ is a word $w$.
Applying our cycle of reductions from Proposition~\ref{prop:eq} (to PDA emptiness, then to CFL
reachability, and then back to 2NPDA recognition), we get another word
$u = u(\A, w)$ and a 2NPDA $\B = \B(\A, w)$ such that \B accepts $u$ iff \A accepts $w$.
But \B in fact doesn't depend on \A or $w$, because it is
a fixed 2NPDA for \DyckTwoReach.
One refers to such languages as \emph{hardest} 2NPDA languages,
because the recognition problem for $L(\B)$
cannot be easier than the recognition problem for any 2NPDA language $L$.
The following theorem states this result in language-theoretic terms.
(Recall that a homomorphism is a mapping, say $h \colon \Sigma^* \to \Sigma_0^*$,
such that $h(u v) = h(u) h(v)$ for all $u, v \in \Sigma^*$.)

\begin{theorem}
\label{th:hardest}
There exists a 2NPDA $\A_0$ over an input alphabet $\Sigma_0$
with the following property:
for every 2NPDA \A over every finite $\Sigma$ there is a homomorphism $h \colon \Sigma^* \to
\Sigma_0^*$ such that, for all $w \in \Sigma^+$,
$w \in L(\A)$ if and only if $h(w) \in L(\A_0)$.
\end{theorem}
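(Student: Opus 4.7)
The candidate for a hardest 2NPDA language is $L(\A_0)$, where $\A_0$ is the fixed 2NPDA for $\DyckTwoReach$ produced in the third reduction of Proposition~\ref{prop:eq}. My plan is to show that the composition of the three reductions there, applied to an arbitrary 2NPDA $\A$ over~$\Sigma$ and an input $w \in \Sigma^+$, produces an input word for $\A_0$ of the form $h_\A(w)$, where $h_\A \colon \Sigma^* \to \Sigma_0^*$ is a homomorphism determined by $\A$ alone.

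I would trace the three reductions in order and observe that each is ``block-local'' in $w$. In the first reduction, the PDA built from $(\A, w)$ has control states $(q, i)$ with $q$ a state of $\A$ and $i$ a tape position, and its transitions at position $i$ depend only on $w_i$ together with the fixed transition table of $\A$; hence the PDA splits into one slot per tape position, parameterised solely by the letter read there. In the second reduction, each PDA transition is replaced by a fixed-size graph gadget, so the resulting $\DyckTwoReach$ instance inherits this slotted structure with one gadget per tape position. In the third reduction, the graph is encoded block by block for $\A_0$; I would arrange the encoding so that cross-gadget edges use only neighbour-relative references (this is possible because every PDA transition changes the tape position by at most one), so each block of the encoding depends on the letter at its position alone. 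Setting $h_\A(a)$ to equal this block then makes the entire chain map $w$ to $h_\A(w_1) \cdots h_\A(w_n)$, as required.

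The main obstacle is handling the boundary of the encoding: $\A$ normally operates on $\lmarker w \rmarker$, so its PDA uses positions $0$ and $|w|+1$ as well, producing two $\A$-specific boundary gadgets that are not the image of any letter of $w$. I would dispose of these by folding all of $\A$'s endmarker-position transitions into the slots for positions $1$ and $|w|$ before applying the chain of reductions; this makes the outer boundary of the encoded graph $\A$-independent, so it can be recognised by $\A_0$ uniformly using its own tape endmarkers. After this adjustment, the word that $\A_0$ receives is exactly $h_\A(w)$, and the equivalence $w \in L(\A) \iff h_\A(w) \in L(\A_0)$ follows because each individual reduction in Proposition~\ref{prop:eq} preserves acceptance end-to-end.
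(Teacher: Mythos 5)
Your first two paragraphs are on the right track: the candidate $\A_0$ and the observation that the chain of reductions in Proposition~\ref{prop:eq} is block-local are exactly the paper's starting point (cf.\ Claim~\ref{c:restricted-d2r} and the discussion of relative-offset edge encoding). But the third paragraph glosses over the hardest part of the argument, and the ``fold'' you propose has a genuine gap.

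The difficulty is that, after you fold $\A$'s $\lmarker$-behaviour into the tape slot for position~$1$ and its $\rmarker$-behaviour into the slot for position~$|w|$, the graph block produced for the first letter of $w$ is no longer equal to the block that the same letter would produce in the middle of $w$: the first block now additionally carries the gadgets that simulate $\A$ over $\lmarker$, and symmetrically for the last block. Since $h_\A$ must send the letter $a$ to the \emph{same} string wherever $a$ occurs, this position-dependence destroys the homomorphism. There is no way to avoid it by choosing cleverly which slot to fold into; the endmarker gadgets genuinely exist, are $\A$-dependent, and are not the image of any input letter. (This is exactly the obstruction that makes Rytter's original result weaker: he only gets $w \in L \Leftrightarrow h(w\$) \in L_0$, with an extra symbol absorbing the boundary gadgets.) The paper overcomes this with a different device: a three-track perfect shuffle $h_1(\llmarker) \shuf h_1(a) \shuf h_1(\rrmarker)$, so that \emph{every} letter's image carries copies of both endmarker gadgets, with edge offsets multiplied by~$3$ and the endmarker tracks written in reverse; the fixed 2NPDA~$\A_0$ is then designed to ``reflect'' off its own tape endmarkers, so that only the outermost copies of the endmarker gadgets are ever reached. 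That reflection trick and the corresponding reindexing of source and sink vertices are the content missing from your proposal. Without it, the claim ``after this adjustment, the word that $\A_0$ receives is exactly $h_\A(w)$'' does not hold.

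Separately, note that $\A$'s behaviour over an endmarker can be an unbounded pushdown computation (e.g.\ looping while pushing), so even setting the homomorphism issue aside, ``folding'' it into a neighbouring position's transitions is not a finite local rewrite of $\delta$; you would have to simulate a virtual-endmarker mode via extra control states, which again yields position-specific gadgets.
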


Essentially, $\B = \A_0$.
Working out the details shows that the mapping $u(\A, \cdot)$ can be made a homomorphism
for every \A.
This requires an appropriate encoding for inputs to~$\A_0$.

\begin{remark}
Rytter~\cite{Rytter-hardest} showed there is a fixed hardest 2NPDA language $L_0$,%
\footnote{Actually, Rytter only proves that, for all
    $w \in \Sigma^+$, one has $w \in L$ iff $h(w\$) \in L_0$.
}
based on the classic hardest context-free
language by Greibach~\cite{Greibach-hardest}.
Theorem~\ref{th:hardest} identifies a different hardest 2NPDA language.
In contrast with Rytter's proof, our construction is self-contained
and does not depend on Greibach's hardest CFL.
Instead, our new hardest 2NPDA language is an encoding of a restricted
version of Dyck-2 Reachability.
\end{remark}

We now describe the hardest language $L(\A_0)$.
The alphabet is
$
\Sigma_0 = \{
\opr, \clr, \ops, \cls,
\vmark,
1,
{-},
\edgesep
\}
$.
The language contains only words of the form
\begin{equation}
\label{eq:hardest-input}
\vmark \ell_1 o_1 \edgesep \ell_2 o_2 \edgesep \ldots \edgesep \ell_q o_q \vmark \ell_{q+1} o_{q+1} \ldots \vmark \ldots \ell_m o_m
\end{equation}
and the membership of such words in the language is determined as follows.
Consider a directed graph $G = (V, E)$ with $V = \{1, \ldots, n\}$
where $n$ is the number of \emph{blocks} separated by the vertex marker~\vmark.
An edge $e = (i, j)$ belongs to~$E$ if and only if
the $i$th block has a subword $\ell_p o_p$ with
$\ell_p \in \BrTwo$, $o_p = 1^k$ or $o_p = -1^k$
where $j = i + k$
and this subword is preceded and followed by symbols from $\{\vmark, \edgesep\}$
or tape endmarker.
The edge label is in this case~$\lambda(e) = \ell_i$.
(If for some $i$ and $k$ the index $j$ is ``off the tape'',
 the tape endmarker counts as one virtual vertex and then
 the counting reverses the direction, ``reflecting'' off the endmarker.)
The word belongs to $L(\A_0)$ if and only if
$(G, \lambda, 1, n)$ is a yes-instance of \DyckTwoReach,
i.e., if $G$ contains a walk from $1$ to $n$ labelled by a word from
the Dyck-2 language.

To sum up,
this restricted version of 2NPDA recognition is the ``hard core''
of the problem:
by Theorems~\ref{th:hardest} and~\ref{prop:eq},
in order to find subcubic algorithms for
\DyckTwoReach, it suffices to handle instances obtained from it
(exploiting any structural
properties).
PDA emptiness and \DyckTwoReach are already hard for sparse graphs: 
a truly subcubic algorithm for either problem restricted
to graphs with a linear number of edges
would already result in a breakthrough algorithm for 2NPDA recognition.

\subsubsection*{Acknowledgements.}
We thank Sayan Bhattacharya and Karl Bringmann for interesting discussions.
Philipp Schepper is supported by the European Research Council (ERC) consolidator grant no.~725978 SYSTEMATICGRAPH.
Rupak Majumdar was funded in part by the Deutsche Forschungsgemeinschaft project 389792660-TRR 248 and by the European Research Council under the Grant Agreement 610150 (ERC Synergy Grant ImPACT).

\bibliographystyle{plainurl}
\bibliography{refs}

\newpage
\appendix

\section{Proof of Claim~\ref{claim:pda-reach-to-dyck-2-reach}}
\label{app:pda-reach-to-dyck-2-reach}

Let $\PDA = (Q, \Sigma,\Gamma, \delta, q_0, F)$.
With at most a constant factor blow-up, we can assume that $\PDA$ is in a normal form,
in which each transition 
$(q, a, Z, q', \gamma, d) \in \delta$ is either a ``push'' ($\gamma = Z' Z \in \Gamma^2$)
or a ``pop'' ($\gamma = \epsilon$) or an ``unchanged'' ($\gamma = Z$).

Let $\ell = \lceil \log_2 |\Gamma| \rceil$.
Fix any injective maps $\phi \colon \Gamma \to \OpenBr^\ell$
                   and $\psi \colon \Gamma \to \CloseBr^\ell$
such that, for all $Z \in \Gamma$,
the words $\psi(Z)$ is obtained from $\phi(Z)$
by switching opening brackets to closing brackets
without changing their type---%
i.e., $\opr$ is replaced by $\clr$
  and $\ops$             by $\cls$;
and then reversing the word.

The graph $G'$ has vertices $V \times Q \cup V' \cup \{q_f\}$, where $V'$ is a set of new vertices and $q_f$ is a new vertex.
We shall specify $V'$ later.

The vertices $s' = (s, q_0)$ and $t' = q_f$.
There is a path from $(v, q)$ to $(v',q')$ labeled with the consecutive letters of $\phi(Z')$
if there is an edge $v \xrightarrow{a} v'$ in $G$
and $(q, a, Z, q', Z'Z)\in\delta$.
The intermediate vertices along this path are distinct and are not incident to any other edge of $G'$.
Similarly, there is a path from $(v, q)$ to $(v', q')$ labeled with the consecutive letters of $\psi(Z)$ 
if there is an edge $v \xrightarrow{a} v'$ in $G$ and $(q, a, Z, q', \epsilon)\in\delta$. 
There is an edge $(v, q) \rightarrow (v', q')$ labeled with $\epsilon$ if there is an edge $v \xrightarrow{a} v'$ in $G$
and $(q, a, Z, q', Z)\in\delta$. 
The set of all intermediate vertices added along the way constitute $V'$.
Finally, there is an edge $(t, q) \rightarrow q_f$ labeled with $\epsilon$ for each $q\in F$.

Since $\PDA$ is fixed, the algorithm runs in linear time in $G$ and outputs $G'$ which is linear in the size of $G$.
By induction, we can show that there is a path from $s$ to $t$ in $G$ labeled with a word from $\PDA$ iff
there is a path from $(s, q_0)$ to $q_f$ in $G'$ labeled with a path in Dyck-$2$.

\section{Proof of Lemma~\ref{l:prune-alternatives}}

Let $N$ be the nonterminal from the statement of the lemma.
If $N$ is not productive, i.e., cannot derive any word,
then \emph{all} of its productions can be removed without any effect on
$L(\mathcal G)$. This is simply because $N$ cannot appear in any successful
derivation. We will therefore assume that $N$ is productive.

Consider the parse tree of any successful derivation from~$N$.
We can find in this parse tree a vertex labelled with~$N$
such that none of its descendants is labelled with~$N$.
The subtree $T_N$ rooted at this vertex corresponds to a derivation that
applies some production $P \colon N \to \xi$ first and never
uses $N$ again.

By removing all other productions with left-hand side~$N$
from $\mathcal G$, we obtain a new grammar~$\mathcal G'$.
Let us show that $L(\mathcal G') \ne \emptyset$.
Indeed,
let $S$ be the axiom of~$\mathcal G$.
As $S$ is productive, $u \in L(\mathcal G)$ for some word~$u$.
Consider any parse tree~$T$ of~$u$ in~$\mathcal G$.
If $T$ contains no occurrence of $N$, then it is already a valid parse tree
with respect to~$\mathcal G'$, and we are done.
Otherwise, for every node labelled with~$N$ in~$T$ from which the shortest
path to the root has no other occurrence of~$N$, we replace
the corresponding subtree by $T_N$.
This results in a valid parse tree with respect to~$\mathcal G'$,
because $T_N$ has one occurrence of $N$ only, namely at its root,
where the production applied is $P$.
The new parse tree is a derivation of some word in $L(\mathcal G')$,
which concludes the proof.

\section{Proof of Proposition~\ref{p:pushdown-cert}}

Let $\mathcal A$ be a \PDA-automaton (saturated or not).
For each $A \in \Gamma$,
let $M^A = (m_{i j})$ denote the $A$-transition matrix of $\mathcal A$,
that is, the $0$--$1$ matrix of size $|S| \times |S|$
in which $m_{i j} = 1$ if $\mathcal A$ contains a transition $i \stackrel{A}{\longrightarrow} j$
and $m_{i j} = 0$ otherwise.
Then rule~\eqref{eq:addtrans} can be decomposed into the following
updates, for all $A, B, C \in \Gamma$:
\begin{align*}
M^A &:= \tobool{ M^A + T^{A,\eps} }, \\
M^A &:= \tobool{ M^A + T^{A,B} \cdot M^B }, \\
M^A &:= \tobool{ M^A + T^{A,BC} \cdot M^B \cdot M^C }.
\end{align*}
The composition of certificates~\eqref{eq:pda-sep}
and the existence of verification algorithms follow as
in Subsection~\ref{s:cert:non-reach}.

\section{Fine-grained reductions and proof of Theorem~\ref{th:nseth-2dyck}}
\label{app:fgc}

We discuss further preliminaries on fine-grained complexity,
referrinig the reader to the recent survey by Vassilevska Williams~\cite{WilliamsSurvey}
and to the paper on nondeterministic strong exponential-time hypothesis
by Carmosino et al.~\cite{Carmosino}.

Let $L_1$ and $L_2$ be languages, and let $T_1$ and $T_2$ be time bounds,
i.e., functions $\mathbb N \to \mathbb N$.
We interpret pairs $(L_i, T_i)$ as problems with their conjectured
(or presumed) complexities.
We say that $(L_1, T_1)$ \emph{fine-grained reduces} to $(L_2, T_2)$, written $(L_1, T_1) \leq_{\mathsf{FGR}} (L_2, T_2)$,
if (a) for all $\varepsilon >0$, there is $\delta > 0$  and a deterministic Turing reduction $M^{L_2}$ 
from $L_1$ to $L_2$ such that $\DTIME[M] \leq T_1^{1-\delta}$ and such that
(b) if $Q(M, x)$ denotes the set of queries made by $M$ to the $L_2$ oracle on an input $x$ of length $n$,
then the query lengths obey the time bound
\[
\sum_{q\in Q(M,x)} (T_2(|q|))^{1-\varepsilon} \leq (T_1(n))^{1-\delta}.
\]
Intuitively, a fine-grained reduction from $(L_1,T_1)$ to $(L_2,T_2)$
enables algorithmic savings for $L_2$ to be transferred to $L_1$.
That is, if
$L_2$ can be solved in time~$T_2^{1-\eps}$, then
$L_1$ can be solved in time~$T_1^{1-\delta}$.
A language $L$ with time complexity $T$ is \emph{$\SETH$-hard} if $(\SAT, 2^n) \leq_{\mathsf{FGR}} (L, T)$.

\begin{theorem}[\cite{Carmosino}, Theorem 2 and Corollary 2]
\label{th:carmosino-nseth}
Suppose $\NSETH$ holds and a problem $L$ belongs to $\NTIME[T]\cap \CONTIME[T]$.
Then $(\SAT,2^n)\not \leq_{\mathsf{FGR}} (L,T^{1+\gamma})$ for any $\gamma > 0$.
Also,
for any $L'$ that is $\SETH$-hard with time $T'$, and any $\gamma > 0$, 
we have $(L', T') \not \leq_{\mathsf{FGR}} (L,T^{1+\gamma})$.
\end{theorem}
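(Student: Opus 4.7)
The plan is to proceed by contradiction: suppose $L \in \NTIME[T] \cap \CONTIME[T]$ but, contrary to the claim, $(\SAT, 2^n) \leq_{\mathsf{FGR}} (L, T^{1+\gamma})$ for some $\gamma > 0$. From this reduction I would build a nondeterministic algorithm for $k$-$\TAUT$ that beats exhaustive search, contradicting \NSETH.

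First I would unpack the reduction. Fix $\varepsilon$ with $0 < \varepsilon < \gamma/(1+\gamma)$, so that $(1+\gamma)(1-\varepsilon) > 1$. The fine-grained reduction yields some $\delta > 0$ and a deterministic Turing reduction $M$ from $\SAT$ to $L$ with runtime at most $2^{n(1-\delta)}$ whose query set $Q(M, x)$ satisfies
\[
\sum_{q \in Q(M, x)} T(|q|)^{(1+\gamma)(1-\varepsilon)} \leq 2^{n(1-\delta)}.
\]
Since the exponent is at least $1$ and $T(|q|) \geq 1$, this bound subsumes $\sum_q T(|q|) \leq 2^{n(1-\delta)}$.

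Next I would convert $M$ into a nondeterministic algorithm $N$ for the unsatisfiability problem (the complement of $\SAT$). On input~$x$, $N$ simulates~$M$; at each query~$q$ to the $L$-oracle, $N$ nondeterministically guesses a bit $b_q$ claiming the answer, together with a certificate of length $O(T(|q|))$. It verifies $b_q = 1$ using the $\NTIME[T]$-machine for $L$ and $b_q = 0$ using the $\CONTIME[T]$-machine for $L$. $N$ accepts iff every guessed certificate verifies and the simulated run of $M$ ultimately \emph{rejects}. Soundness holds because certificates can be produced only for the correct $L$-membership of~$q$, so any accepting run of $N$ is a faithful simulation of $M^L$ that rejects, i.e.\ $x \notin \SAT$; completeness holds because for $x \notin \SAT$ we can guess the true $L$-answers and corresponding valid certificates. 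The total time of $N$ on input $x$ is $2^{n(1-\delta)}$ (for simulating $M$) plus $\sum_q O(T(|q|)) = O(2^{n(1-\delta)})$ (for the verifications), giving an $O(2^{n(1-\delta)})$-time nondeterministic algorithm for $k$-CNF unsatisfiability. Standard duality (a $k$-CNF $\varphi$ is unsatisfiable iff the $k$-DNF $\neg\varphi$ is a tautology, on the same variable set) then yields a nondeterministic $O(2^{n(1-\delta)})$-time algorithm for $k$-$\TAUT$, contradicting \NSETH.

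For the second claim I would compose reductions: if $L'$ is $\SETH$-hard with time $T'$, i.e.\ $(\SAT, 2^n) \leq_{\mathsf{FGR}} (L', T')$, and if additionally $(L', T') \leq_{\mathsf{FGR}} (L, T^{1+\gamma})$, then chaining these gives $(\SAT, 2^n) \leq_{\mathsf{FGR}} (L, T^{1+\gamma})$, which the first claim has just excluded. The main obstacle will be the slack accounting: choosing $\varepsilon$ strictly below $\gamma/(1+\gamma)$ is essential, since only then does $(1+\gamma)(1-\varepsilon) > 1$ and the query-size bound absorb $\sum_q T(|q|)$. A secondary technical point, needed for the second claim, is verifying that fine-grained reductions compose cleanly, i.e.\ that chaining the ``for all $\varepsilon$ exists $\delta$'' quantifiers and combining the two query-size sums still produces a valid fine-grained reduction under the definition above; this is routine but must be done carefully.
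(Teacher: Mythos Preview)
The paper does not actually prove this theorem: it is quoted verbatim as a result of Carmosino et al.\ (their Theorem~2 and Corollary~2) and used as a black box to derive Theorem~\ref{th:nseth-2dyck}. So there is no ``paper's own proof'' to compare against.

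That said, your argument is the standard one from the Carmosino et al.\ paper and is correct. The key idea---simulate the deterministic reduction and, at each oracle query, nondeterministically guess the answer bit together with a membership or non-membership certificate verifiable in time $O(T(|q|))$---is exactly right, and your choice of $\varepsilon < \gamma/(1+\gamma)$ so that $(1+\gamma)(1-\varepsilon) > 1$ is precisely what makes the query-time budget $\sum_q T(|q|)^{(1+\gamma)(1-\varepsilon)} \le 2^{n(1-\delta)}$ dominate $\sum_q T(|q|)$. The second claim via composition of fine-grained reductions is also the intended route; the transitivity of $\leq_{\mathsf{FGR}}$ is Lemma~4.2 in Carmosino et al., and your sketch of the quantifier chaining is adequate.
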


We are now ready to formulate Theorem~\ref{th:nseth-2dyck} rigorously.

\begin{theorem}[{Theorem~\ref{th:nseth-2dyck} restated}]
\label{th:nseth-2dyck-full}
Unless $\NSETH$ fails, $(\SAT, 2^n) \not \leq_{\mathsf{FGR}}
(\DyckTwoReach, n^{\omega+\gamma})$ for any $\gamma > 0$.
\end{theorem}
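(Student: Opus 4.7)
The plan is to derive the statement as a direct consequence of Theorem~\ref{th:cert} and Theorem~\ref{th:carmosino-nseth}, with only a small reparametrisation of exponents to bridge the two. Concretely, I will identify a single time bound $T(n)$ such that $\DyckTwoReach \in \NTIME[T]\cap\CONTIME[T]$, and then feed $T$ into Theorem~\ref{th:carmosino-nseth}.

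First, I would invoke Theorem~\ref{th:cert} to extract the two-sided nondeterministic complexity of $\DyckTwoReach$. The theorem gives $\DyckTwoReach \in \NTIME(n^2) \cap \CONTIME(n^\omega)$, and since $\omega \ge 2$ the inclusion $\NTIME(n^2) \subseteq \NTIME(n^\omega)$ is immediate (up to polylogarithmic factors absorbed in the time bound), so
\[
\DyckTwoReach \in \NTIME(n^\omega) \cap \CONTIME(n^\omega).
\]
Thus the hypothesis of Theorem~\ref{th:carmosino-nseth} is met with $L = \DyckTwoReach$ and $T(n) = n^\omega$.

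Next, I would translate the resulting bound $T^{1+\gamma'}$ in Theorem~\ref{th:carmosino-nseth} into the form $n^{\omega+\gamma}$ appearing in the present statement. Given an arbitrary $\gamma > 0$, set $\gamma' := \gamma/\omega$, which is strictly positive. Then
\[
T(n)^{1+\gamma'} = n^{\omega(1+\gamma/\omega)} = n^{\omega+\gamma},
\]
so Theorem~\ref{th:carmosino-nseth}, applied under the assumption that $\NSETH$ holds, yields
\[
(\SAT, 2^n) \not\leq_{\mathsf{FGR}} (\DyckTwoReach,\, n^{\omega+\gamma}),
\]
which is exactly the desired conclusion. Contrapositively, if such a fine-grained reduction exists for some $\gamma > 0$, then $\NSETH$ fails.

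There is no genuine obstacle here: the proof is essentially a packaging step, with the only mild subtlety being the exponent rescaling between $T^{1+\gamma'}$ and $n^{\omega+\gamma}$, which is handled by the substitution above. All of the mathematical content lies in Theorem~\ref{th:cert}, whose two certificate constructions (the grammar-based walk schemes for yes-instances and the matrix-based separators for no-instances) supply the $\NTIME(n^2)$ and $\CONTIME(n^\omega)$ witnesses; Theorem~\ref{th:carmosino-nseth} is then applied as a black box.
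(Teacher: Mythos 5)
Your proposal is correct and takes the same route as the paper: the paper simply observes that Theorem~\ref{th:nseth-2dyck-full} ``follows from Theorem~\ref{th:cert} and~\ref{th:carmosino-nseth},'' which is precisely what you spell out, with the additional (correct) detail of combining $\NTIME(n^2)\subseteq\NTIME(n^\omega)$ and the rescaling $\gamma' = \gamma/\omega$.
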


It remains to observe that Theorem~\ref{th:nseth-2dyck}
follows from Theorem~\ref{th:cert} and~\ref{th:carmosino-nseth}.

\section{Proof of Proposition~\ref{prop:eq}}
\label{app:equiv}

\paragraph*{Preliminary Definitions}

\emph{Two-way nondeterministic pushdown automata (2NPDA)} \cite{GrayHI67}
are a powerful formalism introduced in 1967 by Gray, Harrison, and Ibarra~\cite{GrayHI67}.
2NPDA have the form
$\twonpda = (Q, \Sigma, \Gamma, \delta, q_0, \br F)$, where 
$Q$ is a finite set of states,
$\Sigma$ are $\Gamma$ are finite alphabets of input and stack symbols, respectively,
$q_0 \in Q$ is the initial state, 
$F\subseteq Q$ is the set of final states,
and a transition relation
$\delta \subseteq
 Q \times \Sigma \times \Gamma \times
 Q \times \Gamma^* \times \set{-1, 0, +1}$.
We assume $\Sigma$ contains two designated ``end of tape'' symbols $\lhd$ and $\rhd$.
We assume that $\Gamma$ contains a designated ``end of stack'' symbol~$Z_0$
such that any transition $(q, \sigma, Z_0, q', w, d)\in \delta$ satisfies $w = Z_0$.
Thus, no transition of $\twonpda$ replaces $Z_0$ on the stack with a different symbol
and no transition pushes $Z_0$.

Informally, the 2NPDA $\twonpda$ has a finite control (states from $Q$)
which reads a symbol of $\Sigma$ on its input tape and the top symbol in $\Gamma$ of a pushdown store. 
Based on the transition relation $\delta$, 2NPDA moves by changing the control state, 
replacing the top symbol of the pushdown store by a finite string of symbols 
(possibly the empty string),
and moving its input head at most one symbol left or right. 
Initially, the 2NPDA is in state $q_0$, and its pushdown store consists of the single symbol $Z_0$. 
The input tape consists of a word $w\in  (\Sigma \setminus \set{\lmarker, \rmarker})^*$
surrounded by a left marker $\lmarker$ and a right marker $\rmarker$ and the 2NDPA scans the left marker~$\lmarker$.

\begin{remark}
We include the endmarkers \lmarker and \rmarker in the set $\Sigma$ here,
even though we did not mention them back in Section~\ref{s:fl}
when specifying the alphabet $\Sigma_0$ for our hardest 2NPDA language.
Naturally, all symbols used by automata (including the endmarkers) should
be included in the tape alphabet of these automata.
\end{remark}

A configuration of the 2NPDA $\twonpda$ is a triple $(q, w\hat{a}x, \gamma)$, where
$q\in Q$, $w, x\in\Sigma^*$, $a\in \Sigma$, and $\gamma\in\Gamma^*$.  
The ``hat'' on $a$ denotes that the machine is currently scanning the letter $a$.
We write $(q_1, a_1 \ldots \hat{a}_i \ldots a_n, Z\gamma) \rightarrow (q_2, a_1\ldots \hat{a}_j \ldots a_n, \gamma'\gamma)$
whenever $(q_1, a_i, Z, q_2, \gamma', d)\in \delta$ for $d \in\set{-1, 0, +1}$, and $j = i+d$.
We require $j\in \set{1,\ldots, n}$, that is, the scan position does not ``fall off'' the input word.
Note that the input tape is not changed, only the scan position may change.
We write $\rightarrow^*$ for the reflexive and transitive closure of $\rightarrow$.
A word $w\in (\Sigma \setminus\set{\lmarker,\rmarker})^*$ is \emph{accepted} by the 2NPDA if
$(q_0, \hat{\lmarker}w\rmarker, Z_0) \rightarrow^* (q, \lmarker w\hat{\rmarker}, Z_0)$ for some $q\in F$.
The language $L(\twonpda)$ of $\twonpda$ is the set of all accepted words (in $(\Sigma\setminus\set{\lmarker, \rmarker})^*$). 

Informally, the 2NPDA has some run that leads it from the initial configuration with the word on the input
tape to a final state.
Wlog, we can assume above that a word is accepted in a final state with the 2NPDA scanning
the right end marker and the pushdown store only contains $Z_0$.
The transition relation is nondeterministic;
we only require that some run is accepting.
For the reader familiar with one-way automata,
we remark that the role of epsilon-transitions is played
by explicit specification of head movements.

A 1NPDA, or just PDA for short, is a 2NPDA such that $\delta \subseteq Q\times \Sigma\times \Gamma \times Q\times \Gamma^*\times \set{0,+1}$.
Informally, the transitions of a PDA do not allow the scan position to move left, so PDA can only move left to right.
PDA accept exactly the context-free languages.
%
In comparison,
2NPDA are surprisingly powerful devices.
In fact, even their deterministic counterparts can recognize
languages such as $\{a^n b^{p(n)} \mid n \ge 0\}$ where $p$ is a fixed polynomial with natural coefficients
and $\{x \hash y \mid x\mbox{ is a subword (factor) of }y\}$%
~\cite{RytterWRAP,Galil77}.

We consider the following decision problems for these machine classes.
The \emph{recognition} problem for a class of machines $\mathcal{C}$ asks, for a fixed machine $M \in \mathcal{C}$ and an input word $w\in \Sigma^*$,
if $w$ is accepted by $M$, i.e., if $w\in L(M)$.
The \emph{emptiness} problem for class $\mathcal{C}$ asks, given a machine $M\in\mathcal{C}$, if $L(M) = \emptyset$.

\paragraph*{Proof.}

Proposition~\ref{prop:eq} follows from
Lemmas~\ref{l:2npda-rec-to-pda-emp},
       \ref{l:pda-emp-to-dyck-2-reach},
   and~\ref{l:dyck-2-reach-to-2npda-rec}, which we prove next.

\begin{lemma}
\label{l:2npda-rec-to-pda-emp}
There exists a linear-time algorithm that,
given a 2NPDA \B and a word $w$,
outputs a PDA \PDA such that:
\begin{itemize}
\item $|\PDA| \le O(|w|)$ for any fixed \B and
\item the language of \PDA is nonempty iff \B accepts $w$.
\end{itemize}
\end{lemma}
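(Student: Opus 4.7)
The plan is to hard-code the word $w$ and the scan position into the finite control of a PDA that simulates~\B. Since \B is fixed, its state set $Q_\B$, stack alphabet $\Gamma_\B$, and transition relation $\delta_\B$ have constant size; the only parameter that scales with the input to the reduction is the length of~$w$. I would take \PDA to have state set $Q_\B \times \set{0, 1, \ldots, |w|+1}$ (the second component records \B's scan position on $\lmarker w \rmarker$), stack alphabet $\Gamma_\B$, initial state $(q_0, 0)$, and set of final states $\set{(q, |w|+1) \mid q \in F_\B}$, encoding that \B accepts while scanning the right endmarker with $Z_0$ alone on the stack.

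For each transition $(q, c, Z, q', \gamma, d) \in \delta_\B$ and each position $i \in \set{0, \ldots, |w|+1}$ at which $\lmarker w \rmarker$ carries the symbol~$c$, I would emit one \PDA transition from $(q, i)$ to $(q', i + d)$ performing the stack update ``pop~$Z$, push~$\gamma$''. Because \PDA is restricted to rightward or stationary input-head moves, I take its input alphabet to be a singleton $\set{a}$ and treat its own input tape as a dummy: every simulating transition reads~$a$ (alternatively, uses the stationary $d = 0$), and nothing about~$w$ is carried on that tape. If the paper's PDA conventions require $|\gamma| \le 2$, I would spend a constant number of intermediate states per transition to chop longer pushes into two-symbol chunks.

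A direct induction establishes a step-by-step bijection between runs of~\B on $\lmarker w \rmarker$ from its initial configuration and runs of~\PDA from its initial configuration. Consequently \B accepts~$w$ iff \PDA can reach a configuration with a state in $F_\PDA$ (while consuming any number of dummy symbols), iff $L(\PDA) \ne \varnothing$. The algorithm runs in $O(|\delta_\B|\cdot(|w|+2)) = O(|w|)$ time and outputs a PDA with $O(|w|)$ states and $O(|w|)$ transitions, so $|\PDA| = O(|w|)$.

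The one subtlety worth flagging is precisely the asymmetry that motivates the whole construction: \B may move its head left on~$w$, whereas \PDA cannot move left on its own input tape. Encoding the scan position into the control state sidesteps this entirely, because \PDA never needs to consult~$w$ during the simulation; the cost is the $|w|$ blow-up in the state space, which is acceptable here since \B is fixed. All remaining details (treating $Z_0$ as an inert bottom marker and making the initial configurations agree) carry over without incident.
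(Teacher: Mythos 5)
Your proposal is correct and takes essentially the same approach as the paper: encode the 2NPDA's head position into the finite control (yielding $O(|w|)$ states), keep the stack and stack operations identical, make the PDA's own input irrelevant, and accept exactly when the simulated 2NPDA would. The only differences are cosmetic (order of the two state-tuple components and a slightly more explicit handling of normal-form push transitions).
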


\begin{remark*}
In fact, $|\PDA| \le O(|\B| \cdot |w|)$.
\end{remark*}

\begin{proof}
Denote $n = |w|$ and let $S$ be the set of control states of \B.
Construct a PDA \PDA
with the set of control states $Q = \{0, 1, \ldots, n+1\} \times S$.
The first component of the states of \PDA corresponds to a possible position
of the input head of the 2NPDA \B run on $w$.
Indeed, when \B is run on the word $w$,
its head has $n+2$ possible positions: over any
of the $n$ letters of $w$, over the left endmarker,
and over the right endmarker.

PDA \PDA has the initial state $(0, s_0)$, where
$s_0$ is the initial state of \B.
Transitions of the (nondeterministic) PDA \PDA are defined so that \PDA
would simulate the (nondeterministic) computation of \B on $w$.
The stack of \PDA is always the same as the stack of \B,
and the second component of the control state of \PDA the same
as the control state of \B.
Transitions of \B depend on the input letter,
which is available to \PDA, because \PDA `remembers'
in the control state where the input head of \B is positioned---%
and the input word $w$ is fixed.
Transitions of \PDA need not read any letter from the input;
\PDA accepts (rejects) whenever so does \B.
It is straightforward to see that both assertions of the lemma
hold.
\end{proof}

\begin{lemma}
\label{l:pda-emp-to-dyck-2-reach}
There exists a linear-time algorithm that,
given a PDA \PDA,
outputs a directed graph $G = (V, E)$,
labels $\lambda \colon E \to \BrTwo$ and two vertices $s, t \in V$
such that
$(G, \lambda, s, t)$ is a yes-instance of \DyckTwoReach
iff
the language of \PDA is nonempty.
\end{lemma}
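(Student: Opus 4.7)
The plan is to follow and adapt the construction already used for Claim~\ref{claim:pda-reach-to-dyck-2-reach} in Appendix~\ref{app:pda-reach-to-dyck-2-reach}, merely switching the starting viewpoint from reachability between two given vertices to emptiness of the PDA language. The core idea is that the graph mimics the transition diagram of \PDA while encoding stack operations via balanced brackets: a push of a symbol $Z$ is a sequence of opening brackets of length $\lceil \log_2 |\Gamma| \rceil$, and the matching pop is the mirrored sequence of closing brackets. Dyck-2 matching on a walk then automatically enforces stack LIFO semantics.

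First I would normalize \PDA so that every transition is either a \emph{push} ($Z \mapsto Z' Z$), a \emph{pop} ($Z \mapsto \epsilon$), or \emph{unchanged} ($Z \mapsto Z$); this incurs only constant blow-up. Set $\ell = \lceil \log_2 |\Gamma| \rceil$ and fix an injection $\phi\colon \Gamma\to\OpenBr^\ell$, defining $\psi(Z)\in\CloseBr^\ell$ by swapping each $\opr$ in $\phi(Z)$ to $\clr$ and each $\ops$ to $\cls$ and then reversing, so that $\phi(Z)\,\psi(Z)$ is a Dyck-2 word. Then build $G$: its vertices are the control states of \PDA, a fresh source $s$, a fresh target $t$, and fresh intermediate vertices used only inside gadgets. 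From $s$ add a chain spelling $\phi(Z_0)$ leading to $q_0$, so that the initial stack symbol is ``pre-pushed''. For each push transition $(q,a,Z,q',Z'Z)$ add a fresh chain of $\ell$ edges from $q$ to $q'$ spelling $\phi(Z')$; for each pop transition $(q,a,Z,q',\epsilon)$ a fresh chain spelling $\psi(Z)$; for each unchanged transition a two-edge gadget spelling $\opr\,\clr$. From each final state $q\in F$ add a chain to $t$ that spells $\psi(Z_0)$ (or, depending on the acceptance convention used, simply a neutral $\opr\,\clr$ detour followed by a final unwind). By construction every intermediate gadget vertex is fresh and has in-degree and out-degree one.

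Correctness splits in two. The forward direction is immediate: an accepting run of \PDA translates edge-by-edge into a walk from $s$ to $t$ whose label is Dyck-2, since pushes are matched in LIFO order by their corresponding pops, so the resulting bracket sequence is balanced. For the converse I would induct on a Dyck-2 parse of the walk's label: a rule $S\to SS$ splits the walk at a vertex of $Q$ where the intermediate stack is empty, while $S\to\opr S\clr$ and $S\to\ops S\cls$ force the outermost bracket edges to lie at the start of a $\phi$-gadget and at the end of the matching $\psi$-gadget. The degree-one invariant on intermediate vertices guarantees that once a walk enters a gadget it must exit at the far end, so outer matchings correspond to genuine push/pop pairs in \PDA. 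From this structural decomposition one extracts an accepting computation. Linear time and size follow because each transition contributes $O(\ell)$ edges and $|\delta|\cdot\ell = O(|\PDA|)$ in the bit-size model.

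The main obstacle is precisely this converse direction: ruling out ``spurious'' Dyck-2 walks that would match a bracket from the middle of one gadget with a bracket from the middle of an unrelated gadget. The degree-one discipline on intermediate vertices, combined with injectivity of $\phi$, prevents this, but pinning it down requires tracking inductively that at every outermost matched pair $\opr\cdots\clr$ (or $\ops\cdots\cls$) the opening edge must be the \emph{first} edge of some push gadget and the closing edge the \emph{last} edge of the corresponding pop gadget of the same symbol. Once this is established, translating the walk back into a PDA run is mechanical.
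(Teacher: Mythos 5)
There is a genuine gap in your construction: your gadgets for \emph{push} and \emph{unchanged} transitions fail to enforce the top-of-stack guard. A transition $(q,a,Z,q',Z'Z)$ may only fire when $Z$ is on top of the stack, but your push gadget labels the path from $q$ to $q'$ simply with $\phi(Z')$, which never inspects the top symbol; likewise your unchanged gadget $\opr\clr$ is a self-contained Dyck word that checks nothing. This breaks soundness. Concretely, take $\Gamma = \{Z_0, A, B, C\}$, initial state $q_0$, final state $q_4$, and transitions
$(q_0, a, Z_0, q_1, A Z_0)$,
$(q_1, a, B, q_2, C B)$,
$(q_2, a, C, q_3, \epsilon)$,
$(q_3, a, A, q_4, \epsilon)$.
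The language is empty, since from $q_1$ the top symbol is $A$ and the only outgoing transition demands $B$. But in your graph the walk $s \to q_0 \to q_1 \to q_2 \to q_3 \to q_4 \to t$ is labelled $\phi(Z_0)\,\phi(A)\,\phi(C)\,\psi(C)\,\psi(A)\,\psi(Z_0)$, which is a perfectly nested Dyck-2 word — a yes-instance. Injectivity of $\phi$ and the degree-one discipline on intermediate vertices cannot save you here, because the offending gadget never touches the (virtual) stack and thus cannot conflict with what is actually on it.

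The paper's proof avoids this by using a single uniform gadget for \emph{every} transition $(q,a,Z,q',\gamma,d)$: the path from $q$ to $q'$ is labelled $\psi(Z)\cdot\phi(\gamma)$. Thus every transition first pops (i.e., checks and removes) the required top symbol $Z$ as a block of closing brackets, and then pushes the replacement string $\gamma$ as opening brackets. In the counterexample above, the $q_1 \to q_2$ gadget begins with $\psi(B)$, which clashes with the preceding unmatched block $\phi(A)$, so the spurious walk is correctly rejected. To repair your proof, replace the push gadget label $\phi(Z')$ with $\psi(Z)\,\phi(Z)\,\phi(Z')$ (equivalently $\psi(Z)\,\phi(Z'Z)$) and the unchanged gadget $\opr\clr$ with $\psi(Z)\,\phi(Z)$; or, more simply, drop the push/pop/unchanged normalization entirely and use the paper's uniform $\psi(Z)\,\phi(\gamma)$ encoding. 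The rest of your argument (degree-one invariant, induction on a Dyck parse, the chains from $s$ to $q_0$ and from $F$ to $t$ labelled by $\phi(Z_0)$ and $\psi(Z_0)$) is then essentially the same as the paper's.
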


\begin{proof}
We show how to construct the required instance of \DyckTwoReach
given a PDA $\PDA = (Q, \Sigma, \Gamma, \delta, q_0, F)$.

The idea is that we encode stack symbols from $\Gamma$
by sequences of words over the alphabet \OpenBr;
pushing symbols on the stack corresponds to
traversing edges of $G$ labeled by opening brackets,
and popping symbols---to traversing edges labeled by closing brackets.

Let $\ell = \lceil \log |\Gamma| \rceil$.
Fix any injective maps $\phi \colon \Gamma \to \OpenBr^\ell$
                   and $\psi \colon \Gamma \to \CloseBr^\ell$
such that, for all $Z \in \Gamma$,
the words $\psi(Z)$ is obtained from $\phi(Z)$
by switching opening brackets to closing brackets
without changing their type---%
i.e., \opr is replaced by \clr
  and \ops             by \cls;
and then reversing the word.

We next construct an auxiliary graph $G' = (V', E')$ with
labels $\lambda \colon E \to \BrTwo$.
The set $V'$ contains $Q$ as a subset.
For each transition
$(q, a, Z, q', \gamma, d) \in \delta$,
the graph $G'$ contains a path from $q$ to $q'$
of length $\ell \cdot (1 + |\gamma|)$.
The edges of this path are labelled by consecutive letters
of the word $\psi(Z) \cdot \phi(\gamma)$;
all intermediate vertices are distinct and are
incident to no other edge of $G'$.
It is easy to see that the number of edges of $G'$
does not exceed $|\PDA|$.
(Notice that the input letter $a$ is ignored in this construction.)

Recall that the automaton \PDA has a nonempty language
if and only if there is a path from its initial configuration
to a final configuration, enabled by some input word from $\Sigma^*$.
The initial configuration $c_0$ of \PDA
has control state $q_0$ and stack content $Z_0$;
and any final configuration $c$ has some control state $q \in F$
and the same stack content $Z_0$.
By construction, $c_0 \rightarrow^* c$ in the PDA \PDA
if and only if the graph $G'$ has a walk from $q_0$ to $q$
labeled by a word $u \in \BrTwo^*$ such that
$\phi(Z_0) \cdot u \cdot \psi(Z_0)$ is a Dyck-2 word.

It now remains to obtain the graph $G$ from $G'$
by adding fresh states $s$ and $t$
and connecting them to the other vertices by
(1) a path from $s$ to $q_0$ labeled by $\phi(Z_0)$ and
(2) paths from each $q \in F$ to $t$ labeled by $\psi(Z_0)$.
Each of these paths has length $\ell$;
paths of type (2) have $\ell-1$ edges in common.
Now $(G, \lambda, s, t)$ is the instance of \DyckTwoReach
with the required property.
\end{proof}

\begin{lemma}
\label{l:dyck-2-reach-to-2npda-rec}
There exist a 2NPDA language \RLang and a linear-time algorithm that,
given a directed graph $G = (V, E)$ with labels $\lambda \colon E \to \BrTwo$
and two vertices $s, t \in V$,
outputs a word $w$ such that
$w \in \RLang$ iff $(G, \lambda, s, t)$ is a yes-instance of \DyckTwoReach.
\end{lemma}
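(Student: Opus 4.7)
The plan is to realise the encoding sketched for the hardest 2NPDA language in Section~\ref{s:fl} and to exhibit a fixed 2NPDA that decides \DyckTwoReach on such encodings. Given $(G, \lambda, s, t)$ with $V = \{1, \ldots, n\}$, I would first relabel the vertices in linear time so that $s = 1$ and $t = n$. Then, for each vertex $i$, I construct a block $B_i$ that enumerates the outgoing edges from $i$: for every $(i, j) \in E$ with label $\ell \in \BrTwo$ the block contains $\ell$ followed by the unary signed offset $j - i$ (written as $1^{j-i}$ when $j > i$ and as ${-}1^{i-j}$ when $j < i$), with edge descriptors inside one block separated by the delimiter $\edgesep$. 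The blocks are then concatenated with $\vmark$ separators and flanked by the endmarkers. The total length of this word is proportional to the bit size of the adjacency-list representation of~$G$, which yields the linear-time reduction.

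Next I would design a fixed 2NPDA $\A_0$ that tracks the ``current vertex'' by its head position (the head always sits inside the block $B_i$ of the currently visited vertex $i$) and uses the stack to store the unmatched prefix of opening brackets read so far along the guessed walk. Starting with the head positioned inside $B_1$ and the stack containing only $Z_0$, the automaton repeats the following step: (a) scan nondeterministically within the current block, bounded on both sides by $\vmark$ or by an endmarker, until an edge descriptor $\ell\, o$ is selected; (b) if $\ell \in \OpenBr$, push $\ell$ onto the stack, and if $\ell \in \CloseBr$, verify that the top of the stack is the matching opener and pop it (reject otherwise); (c) shuttle the head across $|o|$ $\vmark$-markers in the direction dictated by the sign of $o$. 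Acceptance is declared exactly when the head rests inside the last block (which corresponds to vertex $n$) and the stack has been fully emptied to $Z_0$. Setting $\RLang := L(\A_0)$, which is a fixed language because $\A_0$ does not depend on~$G$, gives the required reduction.

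The main obstacle is implementing step~(c) with constant-size control, since the offset $|o|$ is unbounded and therefore cannot be stored in the finite state. The key trick, exploiting two-wayness, is that both the unary counter~$o$ and the $\vmark$ markers live on the read-only tape: $\A_0$ can shuttle its head between them, ``decrementing'' the counter by one symbol of~$o$ every time it steps across one $\vmark$ and then returning to the next unread symbol of $o$. Because $o$ is placed immediately after $\ell$ and is bordered by $\edgesep$ or $\vmark$, a constant number of control states suffices to coordinate the two heads-of-attention. Since offsets always point to an existing vertex in~$\{1, \ldots, n\}$, the head never needs to leave the tape, and no reflection convention is required here (it is only introduced in Section~\ref{s:fl} to make $\A_0$ accept a \emph{language} closed under the encoding of arbitrary instances).

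Correctness then follows by a straightforward induction on walk length: any Dyck-2 walk $e_0 e_1 \ldots e_k$ from $1$ to $n$ in $G$ induces an accepting run of $\A_0$ in which the edge-selection steps traverse the~$e_i$ in order, and conversely any accepting run projects onto such a walk because the push/match/pop discipline forces the sequence of read labels to form a Dyck-2 word and the head positions record the sequence of visited vertices. Combined with Lemmas~\ref{l:2npda-rec-to-pda-emp} and~\ref{l:pda-emp-to-dyck-2-reach}, this closes the cycle of reductions in Proposition~\ref{prop:eq}, and, by making the encoding map $w \mapsto u(\A, w)$ in the outer cycle a homomorphism, also delivers the hardest 2NPDA language of Theorem~\ref{th:hardest}.
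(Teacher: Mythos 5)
Your construction imports the unary-offset encoding from Section~\ref{s:fl}, but that encoding only works for \emph{Restricted} Dyck-2 Reachability, where property~(c) of Claim~\ref{c:restricted-d2r} guarantees that every edge connects vertices at most $O(1)$ positions apart. For the general \DyckTwoReach instance that Lemma~\ref{l:dyck-2-reach-to-2npda-rec} must handle, an offset $j-i$ can be as large as $|V|-1$, and encoding it in unary takes $\Theta(|V|)$ symbols per edge. The resulting word has length $\Theta\bigl(\sum_{(i,j)\in E}|j-i|\bigr)$, which can reach $\Theta(|E|\cdot|V|)$, whereas the bit size of the input (adjacency lists with binary vertex identifiers) is only $\Theta(|E|\log|V|)$. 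So the output blows up by a factor of $|V|/\log|V|$: the claim that ``the total length of this word is proportional to the bit size of the adjacency-list representation'' is false, the reduction is not linear-time, and --- more importantly --- it no longer preserves the truly-subcubic threshold needed for Proposition~\ref{prop:eq}, since a subcubic 2NPDA recognizer applied to your super-linear word gives an algorithm that is not subcubic in the original instance size.

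The paper's proof avoids this by keeping vertex identifiers in \emph{binary} throughout: $\RLang$ encodes $s$, then $t$, then a list of edges $(u,v)$ with binary vertex names and their labels, yielding a word whose length is genuinely linear in the bit size of $(G,\lambda,s,t)$. Correspondingly, the 2NPDA does not track the current vertex by head position; instead it stores the current vertex in binary \emph{on the stack} (above the accumulated unmatched opening brackets $\sigma$). When it nondeterministically selects the next edge $(u_1,u_2)$, it pops the stored vertex symbol by symbol, comparing in sync with the tape encoding of $u_1$, then updates $\sigma$ according to the edge label, and pushes the encoding of $u_2$. Your shuttle-counting trick with the head is only deployed in Section~\ref{s:fl}, after the cycle of reductions has already produced an instance with bounded offsets; at that later stage it is exactly what makes the homomorphism $h$ work. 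For Lemma~\ref{l:dyck-2-reach-to-2npda-rec} itself you need the binary-addressed, stack-based variant. The rest of your high-level plan (guess a walk, maintain unmatched openers on the stack, accept when the target is reached with the stack reduced to $Z_0$) does match the paper, and the correctness argument you sketch would go through once the encoding is fixed.
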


\begin{remark}
For the linear time bound,
we assume that the graph $G$ is encoded in binary in the input.
If this is not the case, the running time of the algorithm
suffers a slowdown by a factor of $O(\log |V|)$.
\end{remark}

\begin{proof}
Words of the language \RLang are encodings of the quadruples $(G, \lambda, s, t)$,
where the vertices of $G$ are encoded in binary.
In more detail, every $w \in \RLang$ has the following form:
first an encoding of $s$,
then an encoding of $t$,
and finally a sequence of encodings of edges of $G$,
where every edge $e \in E$ is followed by its label~$\lambda(e)$.
All these encodings are separated by delimiters.

The language \RLang is over an alphabet of size $O(1)$;
a word belongs to \RLang iff it follows the format we have just described
and the graph $G$ has a walk from $s$ to $t$ labeled
with a sequence from the Dyck-2 language over \BrTwo. 

The algorithm from the assertion of the lemma simply
writes down the encodings in the required format;
it is clear that
the algorithm runs in linear time
and the obtained word belongs to \RLang
iff $(G, \lambda, s, t)$ is a yes-instance of \DyckTwoReach.

It remains to prove that the language \RLang is recognized by a 2NPDA.
Let us describe this 2NPDA~\Rp.
It first reads the input word and checks that it follows the format
described above. If this is not the case, \Rp rejects,
otherwise it guesses the required walk in $G$ from $s$ to $t$
as follows.

A configuration of \Rp stores on the stack the following data:
\begin{itemize}
\item (at the bottom) a sequence $\sigma \in \OpenBr^*$, and
\item (at the top) a vertex $v \in V$.
\end{itemize}
In this configuration, \Rp has already found a walk from $s \in V$
to $v \in V$ labeled with some word $\sigma' \in \BrTwo^*$ that reduces
to $\sigma$. (A word $\sigma'\in\BrTwo^*$ reduces to $\sigma$ if $\sigma$ can
be obtained from $\sigma'$ by a sequence of transformations that replace
the subwords $\opr \clr$ and $\ops \cls$ with $\epsilon$.)

Here is how \Rp works:
\begin{enumerate}
\item
At the beginning, initialize $\sigma$ with the empty word
and $v$ with $s \in V$, pushing them to the stack.
\item
Repeatedly guess the next edge $e \in E$ in the walk
(leaving the loop nondeterministically after some iteration):
\begin{enumerate}
\item
move the head to the encoding of $e = (u_1, u_2)$ written on the input tape;
\item
pop the encoding
of $v \in V$ from the stack, reading the encoding of $u_1$ from
the input tape in sync; if $u_1 \ne v$, reject;
\item
look at the label $\lambda(e)$:
\begin{itemize}
\item if $\lambda(e) \in \OpenBr$, then push $\lambda(e)$ onto the stack,
extending the current $\sigma \in \OpenBr^*$, and
\item if $\lambda(e) \in \CloseBr$, then pop the last symbol of $\sigma \in \OpenBr^*$;
proceed if the two symbols form a matching pair, otherwise
reject (also reject if $\sigma$ is empty);
\end{itemize}
\item
push the encoding of $u_2$ to the stack.
\end{enumerate}
\item
Check if the current vertex $v$ is equal to $t$
and $\sigma$ is empty. Accept if the check succeeds, otherwise reject.
\end{enumerate}
It is easy to see that an accepting computation of \Rp exists
iff $(G, \lambda, s, t)$ is a yes-instance of \DyckTwoReach.
\end{proof}

\section{On an $O(n^3 / \log n)$ algorithm for PDA emptiness}
\label{app:chaudhuri}

As mentioned in Section~\ref{s:fl},
the PDA emptiness to Dyck-2 reachability reduction from Proposition~\ref{prop:eq},
combined with Chaudhuri's algorithm
for CFL reachability~\cite{Chaudhuri}, implies
a slightly subcubic bound for PDA emptiness.

Indeed, Chaudhuri shows how to solve instances of CFL reachability
for a fixed language (including the Dyck-2 language)
in time $O(n^3 / \log n)$, where $n$ is the number of nodes in the graph.

Suppose we start with a PDA emptiness instance with $s$ states, $t$ transitions,
and $r$ stack symbols. Note that we can safely ignore the input alphabet symbols.
The bit size of the instance
is $b = O(t \log (s + r))$.
The reduction from Lemma~\ref{l:pda-emp-to-dyck-2-reach}
gives an instance of Dyck-2 reachability
with $O(b)$ nodes. Chaudhuri solves it in time $O(b^3 / \log b)$,
which is subcubic in the bit size of the input of PDA emptiness
(although not necessarily subcubic in $s+t$).

This complexity seems folklore but was never made explicit. 
In particular, ``textbook'' algorithms for PDA emptiness go through equivalent context-free grammars~\cite{HopcroftMotwaniUllman}, for which
a cubic blow-up is unavoidable in the worst case~\cite{GoldstinePW82}.

\section{Proof of Theorem~\ref{th:hardest}} 
\label{s:hardest:proof}

Fix an arbitrary 2NPDA~\A over a finite alphabet $\Sigma$.
We can assume with no loss of generality that \A has a single
final state and that it is different from its initial state:
$|F| = 1$, $q_0 \not\in F$. (It is an easy exercise to modify
\A to ensure this assumption holds.)

Suppose an input word $w \in \Sigma^*$ is given.
Lemma~\ref{l:2npda-rec-to-pda-emp} reduces
$L(\A)$ to the emptiness problem for a PDA defined as a product
of the word $w$ and 2NPDA $\A$.
More concretely, this PDA has control states $Q = \{0, 1, \ldots, n+1\} \times S$
where $S$ is the set of control states of \A.
Note that $|Q| = O(|w| \cdot |\A|) = O(|w|)$ since \A is fixed.
Here and below,
the constant behind $O(\cdot)$ depends on \A but not on $w$.
Similarly, the stack alphabet of the PDA is fixed too.
We now give this PDA as input to
a further reduction to Dyck-2 Reachability (Lemma~\ref{l:pda-emp-to-dyck-2-reach}),
which produces an instance $(G, \lambda, s, t)$.

\begin{claim}
\label{c:restricted-d2r}
The graph $G$ has the following properties:
\begin{enumerate}
\renewcommand{\theenumi}{(\alph{enumi})}
\renewcommand{\labelenumi}{\theenumi}
\item
\label{cond:count}
    it has $O(|w|)$ vertices (including intermediate ones,
    resulting from mapping the stack alphabet into binary words);
\item
\label{cond:labels}
    its edges are labeled with symbols from \BrTwo;
\item
\label{cond:order-gap}
    there is a linear order on the vertices such that each edge
    connects two vertices that are $O(1)$ positions away from each other
    in this order;
\item
\label{cond:ss-ordering}
    the source is first and the sink is last in the order.
\end{enumerate}
\end{claim}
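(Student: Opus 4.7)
The plan is to verify the four conditions by tracing through the two-stage reduction (Lemmas~\ref{l:2npda-rec-to-pda-emp} and~\ref{l:pda-emp-to-dyck-2-reach}) that produced $(G, \lambda, s, t)$ and to exhibit an explicit linear order witnessing~\ref{cond:order-gap}. Conditions~\ref{cond:count} and~\ref{cond:labels} follow almost directly from the construction. The intermediate PDA has control states of the form $(i, q)$ with $i \in \{0, \ldots, |w|+1\}$ and $q$ ranging over the fixed state set of $\A$; hence $O(|w|)$ control states. For each tape position~$i$, the transitions of this PDA out of $(i, q)$ correspond to transitions of $\A$ reading the (fixed) letter $w_i$, and there are $O(1)$ of these because $\A$ is fixed. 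Each such PDA transition is then expanded by Lemma~\ref{l:pda-emp-to-dyck-2-reach} into a path of length $\ell(1+|\gamma|)$, where $\ell = \lceil \log |\Gamma| \rceil$ and $|\gamma| \le 2$ are both $O(1)$, adding only $O(1)$ intermediate vertices per PDA transition. Two constant-length marginal paths join $s$ to $q_0$ and each final state to $t$. Summing gives $O(|w|)$ vertices total, establishing~\ref{cond:count}. For~\ref{cond:labels}, observe that the labels on the constructed paths lie in $\phi(\Gamma) \subseteq \OpenBr^*$ or $\psi(\Gamma) \subseteq \CloseBr^*$, and the marginal paths are labelled by $\phi(Z_0)$ and $\psi(Z_0)$; all symbols therefore belong to~$\BrTwo$.

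For~\ref{cond:order-gap}, I would define the order as follows. Start with $s$, then list the intermediate vertices of the $s \to q_0$ path in path order, then $q_0$. Next list the PDA control states $(i, q)$ in lexicographic order on $(i, q)$ (using any fixed order on $\A$'s states). Immediately after each control state $(i, q)$ in the list, insert all intermediate vertices of all transition paths originating at $(i, q)$, one transition after another, each in path order. Finally, append the intermediate vertices of the paths from final states to $t$, in path order, and end with $t$.

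Now I would check the gap bound edge by edge. Because $|\A|$ is constant, the number of PDA transitions out of any $(i, q)$ is $O(1)$, and each contributes $O(1)$ intermediates, so the block of intermediate vertices inserted after a single control state has $O(1)$ size. Consequently, two lex-consecutive PDA control states are $O(1)$ positions apart in the full order. A PDA transition $(q, a, Z, q', \gamma, d)$ has $d \in \{-1, 0, +1\}$ and $|\A$'s state set$| = O(1)$, so $(i+d, q')$ is at most a constant number of lex-positions away from $(i, q)$, hence at most $O(1)$ positions away in the full order. For an edge lying inside such a transition path: both endpoints are either consecutive intermediates sitting in the $O(1)$-sized block after $(i, q)$, or one endpoint is $(i, q)$ itself (adjacent to the block), or one endpoint is $(i+d, q')$ (which sits within $O(1)$ positions of the whole block). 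The $s \to q_0$ prefix and each $q \to t$ suffix are length-$\ell = O(1)$ paths occupying contiguous positions, so their edges are also $O(1)$-separated. Condition~\ref{cond:ss-ordering} is built into the construction.

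The main obstacle is the bookkeeping for~\ref{cond:order-gap}: I must be sure that the intermediate vertices spliced into the order between two consecutive PDA control states do not accumulate superconstantly. The argument rests on two constancy facts that both depend on fixing $\A$ once and for all (with the constants in $O(\cdot)$ allowed to depend on $\A$): the number of PDA transitions leaving any state is $O(1)$, and each transition path in the Dyck-2 reduction has length $O(1)$ because both $\log|\Gamma|$ and the maximum length of the pushed word $\gamma$ are bounded. Granted these, the rest of~\ref{cond:order-gap}, along with~\ref{cond:count}, \ref{cond:labels}, and~\ref{cond:ss-ordering}, is a routine verification.
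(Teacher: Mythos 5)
Your proof is correct and follows essentially the same route as the paper: the linear order is inherited from tape positions of $\lmarker w\rmarker$, condition (c) rests on $d \in \{-1,0,+1\}$, and conditions (a)–(c) all exploit that $\A$ is fixed so every per-position quantity is $O(1)$. You spell out an explicit ordering (lexicographic with interleaved intermediates) where the paper states the block structure more abstractly, but the key observations and the verification are the same.
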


\begin{claimproof}
Property~\ref{cond:count} is due to the fact that \A, and thus
its stack alphabet, is fixed.
Property~\ref{cond:labels} is immediate.
Property~\ref{cond:order-gap} ultimately reflects the fact that \A,
as a two-way pushdown automaton, cannot jump cells of the input tape,
that is, its head can only move one cell left or right if it moves at all ---%
this is represented by $d \in \{-1, 0, +1\}$ in the syntax of 2NPDA.
Thus, the linear order on vertices of the graph is inherited from
the natural ordering of letters of the input tape, $\lmarker w \rmarker$.
Reductions to PDA emptiness and \DyckTwoReach effectively apply
a direct product construction with a constant factor expansion.
Within each block corresponding to an input letter,
vertices can be ordered arbitrarily, provided that the initial
state of \A comes first and the final state last ---%
ensuring property~\ref{cond:ss-ordering}.
Note that our previous preprocessing of \A
ensures that these two states are different, and
our acceptance condition and subsequent reductions do the rest of the work.
\end{claimproof}

We refer to instances $(G, \lambda, s, t)$ with the properties
stated in Claim~\ref{c:restricted-d2r}
as those of \emph{Restricted Dyck-2 Reachability}.

Suppose $k \in \mathbb N$ is chosen such that
the constants behind $O(\cdot)$ in conditions~\ref{cond:count}
and~\ref{cond:order-gap} are at most~$k$ and every vertex has at most~$k$
outgoing edges. We think of this $k = O(1)$
as the ``width'' of the instance, which depends on the original 2NPDA~\A
but not on~$w$.

\begin{remark}
The constant $O(1)$ in property~\ref{cond:order-gap} is reminiscent to
the bounded pathwidth condition (see, e.g, Bienstock et al.~\cite{BienstockRST91}).
However, in our case
the graph has an even more ``regular'' structure. We leave it open whether
this structure can be characterized by constant pathwidth \emph{and}
constant degree
(and restricting the direction and labels of the edges).
%
%
In comparison, Chatterjee and Osang look at pushdown reachability with constant
 \emph{treewidth}~\cite{ChatterjeeO17}.
\end{remark}

It remains to map this instance of Dyck-2 Reachability to an instance
of 2NPDA recognition, for a fixed 2NPDA which we now define.

For each vertex $v$, let $\ind(v)$ denote the position of~$v$ in the order
specified in property~\ref{cond:order-gap}, ranging from $1$ to $O(|w|)$.
(Once again, the constant behind $O(\cdot)$ depends on \A but not on $w$.)
The construction below follows in spirit the proof of Lemma~\ref{l:dyck-2-reach-to-2npda-rec}
and refines the details in order to produce a homomorphism $h$.
The key difference is that, to produce the new input word,
we will not write edges as ``$(u, v), \lambda(u, v)$''.
Instead we will:
\begin{enumerate}
\renewcommand{\labelenumi}{\theenumi)}
\item
   sort the vertices $u$ according to their $\ind(u)$ ascending
   and, for each~$u$, group all the edges departing from~$u$ together
   (each $u$ will have at most~$k$ outgoing edges);
\item
   write edges $(u, v)$ as pairs $(\lambda(u, v), \offset(u, v))$ where
   $\offset(u, v) = \ind(v)-\ind(u)$,
   i.e., how many vertices to the right the destination of the edge is;
   this difference is written in unary notation (without incurring blowup,
   as this difference cannot exceed~$k$);
\item
   write vertices as ``separators'' between groups of edges.
\end{enumerate}
Putting everything together, the input
to the new 2NPDA has the form~\eqref{eq:hardest-input}
(see page~\pageref{eq:hardest-input}),
where \vmark is the vertex marker symbol, $\ell_i \in \BrTwo$,
and each $o_i$ is either the empty word, or $1\ldots1$ or $-1\ldots1$.

\begin{claim}
\label{c:new-hardest}
The set of valid encodings~\eqref{eq:hardest-input} of Restricted Dyck-2
Reachability can be recognized by a fixed 2NPDA.
\end{claim}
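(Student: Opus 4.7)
The plan is to exhibit a fixed 2NPDA $\A_0$ that first checks the syntactic format of its input and then nondeterministically walks through the implicit graph~$G$, using its head position on the input tape to record the current vertex and its stack to record the Dyck-2 state. The invariant to maintain is: whenever the head lies inside the $i$-th block (between the $i$-th vertex marker $\vmark$ and the next such marker or the right endmarker), the simulated walk is at vertex~$i$; and the stack, above the bottom symbol $Z_0$, holds the sequence of so-far unmatched opening brackets from $\OpenBr$.

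Concretely, the first phase of $\A_0$ performs a single left-to-right pass that verifies the input has the shape~\eqref{eq:hardest-input}: blocks separated by $\vmark$, each a list of pairs $\ell_p o_p$ separated by $\edgesep$, with $\ell_p \in \BrTwo$ and $o_p \in \{\eps\} \cup 1^+ \cup {-}1^+$. This is a regular property and fits into the finite control. In the main phase, $\A_0$ positions the head inside block~$1$ (past the first $\vmark$) and repeats: nondeterministically select one of the pairs $\ell_p o_p$ of the current block by a short scan within the block; read $\ell_p$ and either push the corresponding opening bracket (if $\ell_p \in \OpenBr$) or pop the stack top and check that it matches (if $\ell_p \in \CloseBr$), rejecting on mismatch or underflow; then read the sign and unary magnitude of $o_p$ and move the head by $|o_p|$ blocks in the prescribed direction, counting $\vmark$ markers along the way. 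At any iteration, $\A_0$ may nondeterministically halt; it then accepts iff the head lies inside the last block (a bounded right-ward scan reaches the right endmarker before any further $\vmark$) and the stack contains only $Z_0$.

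The implementation issue I would have to handle carefully is that $|o_p|$ is not bounded by any absolute constant ($\A_0$ is fixed but the input is arbitrary), so the offset magnitude cannot be kept in the finite control. The plan is to use the stack as a temporary counter: scan over $o_p$ and push one dedicated counter symbol per $1$ on top of the bracket prefix, then move the head in the required direction while popping one counter symbol per $\vmark$ crossed, until the counter is drained; since counter symbols are disjoint from bracket symbols, the bracket prefix is left intact after the move. This is also where the reflecting boundary convention enters the picture and is, in my view, the trickiest point: when an endmarker is reached while the counter is still non-zero, the subroutine decrements the counter by one (the endmarker acts as a virtual vertex), flips the stored direction in the finite control, and continues. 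Both operations stay within the 2NPDA model.

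Correctness then follows from the invariant by induction on the number of edges in the walk: every accepting run of $\A_0$ on a well-formatted input traces a Dyck-2 walk in $G$ from vertex~$1$ to vertex~$n$, and conversely every such walk can be simulated by $\A_0$. The construction uses a fixed set of states and a fixed stack alphabet (brackets, one counter symbol, and $Z_0$), independent of any parameters of the Restricted Dyck-2 Reachability instance being encoded, so $\A_0$ is a single fixed 2NPDA as required.
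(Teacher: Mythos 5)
Your construction is correct and follows essentially the same approach as the paper's sketch: track the current vertex by head position, keep the Dyck-2 prefix of unmatched opening brackets on the stack, push a separate counter symbol for each unit of the unary offset and pop one per $\vmark$ crossed while scrolling (flipping direction when an endmarker is hit), make nondeterministic choices only among outgoing edges, and accept when the head is in the last block with an empty stack above $Z_0$. The paper's proof is only a brief sketch of exactly this idea (``scrolls left and right ... counting in unary with the help of its stack''), so you have simply filled in more implementation detail.
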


The construction of the 2NPDA in Claim~\ref{c:new-hardest} is similar
to the reduction of Lemma~\ref{l:dyck-2-reach-to-2npda-rec} which we already have.
Instead of guessing the next vertex, this new 2NPDA \AlmostHardestA ``scrolls'' left and
right in a deterministic way to the destination of the current edge, counting
in unary with the help of its stack. The nondeterministic choices that \AlmostHardestA
makes are which outgoing edge from the current vertex to choose next.

Note that the construction of \AlmostHardestA is independent of~$k$,
thus identifying a single hardest language, $L(\AlmostHardestA)$.
Moreover, for a given initial 2NPDA \A this reduction replaces each
symbol in $w$ with $O(1)$ vertices and $O(1)$ edges, where this $O(1)$ depends just
on~\A and not~$w$. The exact collection of these vertices and edges is fully
determined by each symbol of $w$, independently of its position within $w$.
The vertices are not addressed in any ``absolute''
numbering scheme --- so this mapping can be realised as a homomorphism.

\begin{remark}
The use of relative rather than absolute addresses
(to encode $\offset$s)
appears in a related context but for a different problem
in Neal's work on taxonomic inference~\cite{Neal}, which is at the origin
of the connection between 2NPDA and program analysis.
\end{remark}

\paragraph*{Summary and the endmarkers problem.}

We now have achieved the following: for every 2NPDA \A
there is a homomorphism $h_1$ such that $w \in L(\A)$
if and only if
$h_1(\llmarker w \rrmarker) \in L(\AlmostHardestA)$.
Note the appearance of the endmarkers here.
(We use \llmarker instead of \lmarker and
        \rrmarker instead of \rmarker to avoid a notation
 clash in the discussion that follows.)
They reflect the fact that, in the chain of our reductions,
the set of control states of the PDA is
$\{0, 1, \ldots, n+1\} \times S$ not
$\{1, \ldots, n\} \times S$.

To lift our construction from $\llmarker w \rrmarker$ to just $w$,
it may be tempting to
appeal to the following fact, which is not difficult to prove.
Let $x, y \in \Sigma^*$ be fixed.
Suppose a 2NPDA accepts
a language $L \sset x \cdot \Sigma^* \cdot y$.
Then there exists another 2NPDA which accepts
the language $\{ w \mid x w y \in L \}$.

Unfortunately, this fact does not quite achieve our goal.
This is because the new 2NPDA we would obtain from it
depends on $x$ and $y$. In our context, $x$ and $y$ should
be the images of the original endmarkers, i.e.,
we would like to have $x = h_1(\llmarker)$ and $y = h_1(\rrmarker)$.
But these two words depend on the homomorphism~$h_1$, and thus
on the 2NPDA~\A that we started from.
This is at odds with our objective: we need a single 2NPDA
for our hardest language, not an entire family dependent on \A.

There are several ways to deal with this issue.
One is reminiscent of Rytter's approach~\cite{Rytter-hardest}:
we can decide we are content with keeping a single endmarker in,
i.e., we would only like
to find an $L_0$ such that, for all
$w \in \Sigma^+$, one has $w \in L$ iff $h(w\$) \in L_0$.
Here $\$$ is a fresh symbol. It is not very difficult
to find such an $h$ and $L_0$ based on our construction:
essentially, the word $h_1(\llmarker)$ needs to be merged
with the word $h_1(\rrmarker)$ and placed to the right of $h_1(w)$.
So we would like to choose
$h(\$) = h_1(\rrmarker) h_1(\llmarker)$ and
$h(a) = h_1(a)$ for all other symbols~$a$.
The 2NPDA for $L_0$ is the same as our 2NPDA~\AlmostHardestA constructed
above, with the following modification.
Suppose it starts following an edge from some vertex (block) to the left
but hits the left end of the tape, i.e., the left endmarker~\lmarker. We now
use this symbol to refer to the tape alphabet of the 2NPDA~\AlmostHardestA
(and not the tape alphabet of the original machine~\A).
The new 2NPDA will move all the way to the right end of the tape
and continue its search for the destination vertex from the
right endmarker~\rmarker.
Edges within $h_1(\rrmarker)$ need not be changed, but the ones
among them that lead to the right ($\offset(u,v) > 0$, or equivalently
$o_i \in 1^+$) will make the 2NPDA hit~\rmarker, go back to the left
of the tape and continue the seart from~\lmarker.

One further technicality that needs to be dealt with is the beginning
and end of the computation. Recall that our Restricted Dyck-2 Reachability
asked for a path from the very first vertex to the very last one.
Since $h_1(\llmarker)$ moved, we now need to change this convention.
More concretely,
the only two vertices that we can distinguish correspond to the
\emph{last two} control states and the head position over the
\emph{left} endmarker.
So the original 2NPDA~\A needs to be changed accordingly.

While this approach recovers Rytter's result,
we show below that there is a way to eliminate the extra symbol~$\$$
altogether.

\paragraph*{Merging endmarker blocks into other symbols.}

Our solution to the endmarkers problem acknowledges
that the words $h_1(\llmarker)$ and $h_1(\rrmarker)$ cannot be
eliminated completely. Indeed, the vertices and edges that these
two words encode correspond to the behaviour of the original 2NPDA~\A
over the tape endmarkers, and this behaviour can contribute to the
computations of~\A in a nontrivial way.

However, what we can do is to embed all this information
into words $h(a)$ for \emph{all} other symbols~$a$.
For a first intuition (to be amended later), we would like to set
$h(a) = h_1(\llmarker) \shuf h_1(a) \shuf h_1(\rrmarker)$
for all non-endmarker symbols~$a$,
where $\shuf$ denotes a specially tailored
ternary version of the perfect shuffle operation.
More concretely, let $w_1, w_2, w_3$ be arbitrary words
such that, for some single~$\ell$,
we have $w_i = \prod_{j=1}^{\ell} \vmark w_{i,j}$
where none of the words $w_{i,j}$ contains the vertex marker symbol~\vmark.
Then
\begin{equation*}
w_1 \shuf w_2 \shuf w_3
:=
\prod_{j=1}^{\ell}
\vmark w_{1,j}
\vmark w_{2,j}
\vmark w_{3,j}
\enspace.
\end{equation*}
Note that this shuffling relies on $\ell$ being the same for all
three arguments, and ultimately this means the same number
of vertices (blocks) in all words~$h_1(a)$.
This is in fact ensured by our constructions above
(although we could always achieve this by introducing extra
 dummy vertices where necessary).

As a result of
this shuffling arrangement, we can think of new input
words as having three interleaving ``tracks'', each containing a separate
sequence of vertices.
Naturally, this requires some changes to the wiring, as follows.

First, the offsets that specify the edges of the graph departing
from the vertices of $h_1(a)$ need to be updated.
This is not difficult.
Recall that edge destinations are specified using relative addresses of
vertices.
For every edge from a vertex in $h_1(a)$,
its offset needs to be multiplied by~$3$, so that the edge skips
intermediate vertices from copies of $h_1(\llmarker)$ and
$h_1(\rrmarker)$.

Second, we need to provide a way for the new 2NPDA~\HardestA
to reach the vertices in $h_1(\llmarker)$ and $h_1(\rrmarker)$.
To achieve this, we consider the scenario in which \HardestA~will
traverse edges leading to a vertex in $h_1(\llmarker)$.
(The case of $h_1(\rrmarker)$ is handled in a symmetric way.)
Suppose the head of the 2NPDA is over a block (vertex)
within the leftmost $h_1(a)$. Taking an edge with a negative
offset, it moves left but then hits the left tape endmarker~\lmarker.
When it does so, the stack of the 2NPDA still contains the number
of vertices to be skipped.
The 2NPDA then needs to change from the second (main) track,
which contains the information from $h_1(a)$s, to the first track,
which stores multiple copies of the word~$h_1(\llmarker)$.
Effectively, this amounts to treating~\lmarker as just another
\vmark that on top of its usual function makes the machine
change direction.
After that, however, we see that
the number of vertices to be skipped was counted from the right
of $h_1(\llmarker)$ and not from the left where the head of
the automaton is now located.
Thus, we \emph{reverse} the encoding of each of $h_1(\llmarker)$
and $h_1(\rrmarker)$, as follows:
we re-define our special shuffle as
\begin{equation*}
w_1 \shuf w_2 \shuf w_3
:=
\prod_{j=1}^{\ell}
\vmark \negate w_{1,\ell+1-j}
\vmark w_{2,j}
\vmark \negate w_{3,\ell+1-j}
\enspace,
\end{equation*}
where $\negate u$ is the same word as $u$ in which every maximal
subword of the form $-1^m$ is replaced with $1^m$ and
each $1^m$, without a preceding $-$, with $-1^m$.
Our 2NPDA must remember, in its control state, which ``track'' of the input
it is over. The second track corresponds to the usual operation.
Over the first track:
\begin{itemize}
\item Edges previously specified by positive offset needs to
followed to the \emph{left} instead of to the right (hence the
$\negate w_{1,\ell+1-j}$ above and not $w_{1,\ell+1-j}$. If
the left tape endmarker~\lmarker is encountered, the automaton
transitions to the second (main) track,
and only then continues to the right (in the normal mode).
\item Edges specified by the negative offset need to be
followed to the \emph{right} instead of to the left
(again, this matches the $\negate w_{1,\ell+1-j}$ above).
We note that if the input word for our 2NPDA is the homomorphic
image under~$h$ of some word in~$\Sigma^*$, then the automaton
will never leave the leftmost $h(a)$ while being on the first track,
because the original
2NPDA~\A cannot move left from the left endmarker.
\end{itemize}
The third track is arranged in a similar way.

Importantly, while we apply these changes to $h$ and the ``wiring'' of the
graph, we can keep the semantics of our hardest language untouched.
The ``tracks'' themselves need not enter the description of the language.
The only new ``feature'' that is necessary is \emph{changing} the tracks ---
and this can be achieved simply by specifying that
when our new 2NPDA~\HardestA encounters a tape endmarker during
its operation, this endmarker is counted as a virtual vertex
and ``reflects'' off it, continuing the countdown in the opposite direction.

However, as was the case with the approach described above and involving $\$$,
our new construction of~$h$ breaks the convention about the
source and target vertices in the Dyck-2 reachability instance
(albeit in a slightly different way).
Because of the effective reversal of vertex ordering within
$h_1(\llmarker)$ and $h_1(\rrmarker)$,
the required adjustment to the original 2NPDA~\A is
that its initial control state needs to be the \emph{last}
and its (only) final control state the \emph{first}
in the ordering.

To sum up, by applying these adjustments and ``compiling''
the homomorphism~$h$ the way we have described,
we arrive at the desired 2NPDA~\HardestA.
(Note that there is freedom in whether we take
$\eps \in L(\HardestA)$ or $\eps \not\in L(\HardestA)$,
but as some 2NPDA languages contain $\eps$ and some do not,
their homomorphic images will necessarily disagree
on $\eps$, no matter our choice of the homomorphism.)
This completes the proof.

\end{document}